\newcommandx{\unsure}[2][1=]{\todo[linecolor=red,backgroundcolor=red!25,bordercolor=red,#1]{#2}}
\newcommandx{\change}[2][1=]{\todo[linecolor=blue,backgroundcolor=blue!25,bordercolor=blue,#1]{#2}}
\newcommandx{\info}[2][1=]{\todo[linecolor=OliveGreen,backgroundcolor=OliveGreen!25,bordercolor=OliveGreen,#1]{#2}}
\newcommandx{\improvement}[2][1=]{\todo[linecolor=Plum,backgroundcolor=Plum!25,bordercolor=Plum,#1]{#2}}
\newcommandx{\thiswillnotshow}[2][1=]{\todo[disable,#1]{#2}}
\newcolumntype{d}[1]{D{.}{.}{#1}}
\newcommand\redout{\bgroup\markoverwith
{\textcolor{red}{\rule[0.5ex]{2pt}{0.8pt}}}\ULon}
\lstdefinestyle{mystyle}{
  backgroundcolor=\color{white},
  xleftmargin = 0.2cm,
  commentstyle=\color{codegreen},
  keywordstyle=\color{magenta},
  numberstyle=\footnotesize\color{codegray},
  stringstyle=\color{codepurple},
  basicstyle=\ttfamily\footnotesize,
  breakatwhitespace=false,
  breaklines=true,
  captionpos=b,
  keepspaces=false,
  numbers=left,
  numbersep=5pt,
  showspaces=false,
  showstringspaces=false,
  showtabs=false,
  tabsize=2,
  lineskip=-2ex 
}
\begin{document}
\fancyhead{}
\affiliation{\LARGE Hongshi Tan$^{1}$, Xinyu Chen$^{1}$, Yao Chen$^{2}$, Bingsheng He$^{1}$, Weng-Fai Wong$^{1}$ \vspace{1mm}} 
\affiliation{\large
$^{1}$School of Computing, National University of Singapore\hspace{5mm}
$^{2}$Advanced Digital Sciences Center, Singapore
\vspace{1mm}}

\title[ThundeRiNG: Random Number Generation on FPGAs]{ThundeRiNG: Generating Multiple Independent Random \\ Number Sequences on FPGAs}


\renewcommand{\shortauthors}{H. Tan, et al.}

\begin{abstract}
In this paper, we propose ThundeRiNG, a resource-efficient and high-throughput system for generating multiple independent sequences of random numbers (MISRN) on FPGAs. Generating MISRN can be a time-consuming step in many applications such as numeric computation and approximate computing. Despite that decades of studies on generating a single sequence of random numbers on FPGAs have achieved very high throughput and high quality of randomness, existing MISRN approaches either suffer from heavy resource consumption or fail to achieve statistical independence among sequences.  In contrast, ThundeRiNG resolves the dependence by using a resource-efficient decorrelator among multiple sequences, guaranteeing a high statistical quality of randomness. Moreover, ThundeRiNG develops a novel state sharing among a massive number of pseudo-random number generator instances on FPGAs. The experimental results show that ThundeRiNG successfully passes the widely used statistical test, TestU01, only consumes a constant number of DSPs (less than 1\% of the FPGA resource capacity) for generating any number of sequences, and achieves a throughput of 655 billion random numbers per second. Compared to the state-of-the-art GPU library, ThundeRiNG demonstrates a $10.62\times$ speedup on MISRN and delivers  up to $9.15\times$ performance and $26.63\times$ power efficiency improvement on two applications ($\pi$ estimation and Monte Carlo option pricing). {This work is open-sourced on Github at~\url{https://github.com/Xtra-Computing/ThundeRiNG}.}
\end{abstract}

\begin{CCSXML}
<ccs2012>
<concept>
<concept_id>10010583.10010600.10010628.10010629</concept_id>
<concept_desc>Hardware~Hardware accelerators</concept_desc>
<concept_significance>500</concept_significance>
</concept>
<concept>
<concept_id>10002950.10003648.10003670.10003687</concept_id>
<concept_desc>Mathematics of computing~Random number generation</concept_desc>
<concept_significance>500</concept_significance>
</concept>
</ccs2012>
\end{CCSXML}

\ccsdesc[500]{Hardware~Hardware accelerators}
\ccsdesc[500]{Mathematics of computing~Random number generation}

\keywords{FPGA, pseudorandom number generation, statistical testing}



\maketitle

\newcolumntype{Y}{>{\centering\arraybackslash}X}
\newcolumntype{b}{X}
\newcolumntype{s}{>{\hsize=.5\hsize}X}

\DeclarePairedDelimiterXPP\seq[2]{}{\{}{\}}{_{#2}}{#1}
\DeclarePairedDelimiterX\set[1]\lbrace\rbrace{\def\given{\;\delimsize\vert\;}#1}


\section{Introduction}\label{sec:int}
A pseudo-random number generator (PRNG) generates a sequence of uniformly distributed random numbers. It is a fundamental routine at the core of many modern applications,
i.e., Monte Carlo simulation~\cite{rubinstein2016simulation,plimpton2019direct} and approximated graph mining~\cite{quoc2017approximate,222637,pavan2013counting}. Many of these applications are inherently parallel and can cope well with the increasing amount of data by mapping the parallelism onto modern hardware. This has led to the need to generate a massive quantity of pseudo-random numbers with high quality of statistical randomness. In other words, the PRNG itself must also be scalable~\cite{weigel2018monte}.

Field programmable gate arrays (FPGAs) have demonstrated promising performance on single sequence generation, benefiting from the good fit between the computation of PRNGs and the architecture of FPGAs.
PRNG generally adopts recurrence algorithms~\cite{everest2003recurrence} for generating a sequence of numbers and consists of two successive stages, as shown in the following equations.
\begin{align}
x_n &= f(x_{n-1}) \;  n=1,2,3,\ldots \label{eq:def1}\\
u_n &= g(x_n) \label{eq:def2}
\end{align}
The state $x_n$ belongs to $\boldsymbol{X}$,  which is a finite set of states (the state space). $f \colon \boldsymbol{X} \to \boldsymbol{X} $ is the state transition function.
$g \colon \boldsymbol{X} \to \boldsymbol{U} $ is the output function, where $U$ is the output space.
The generation of random numbers involves the following repeated steps: first, the state of the PRNG is updated according to~\Cref{eq:def1}; then, ~\Cref{eq:def2} extracts the random number $u_n$ from the state $x_n$.
To guarantee statistical randomness, {existing FPGA-based PRNGs~\cite{li2011software,dalal2008fast,thomas2012lut} usually implement the state transition with a large state space, requiring
{\em block RAMs} (BRAMs) in the FPGAs to be used as storage for state processing}.
The output stage usually includes bitwise operations such as truncation or permutation to increase the unpredictability of the sequence.
{Leveraging the bit-level customization capability of FPGAs, the algorithm-specific permutation can be efficiently implemented and pipelined with the state transition to achieve high throughput in FPGAs.}
For instance, previous studies~\cite{thomas2012lut,li2011software,dalal2008fast,bakiri2017ciprng,dabal2014study,dabal2012fpga} have shown that FPGAs deliver a better performance than CPU or GPU based single sequence generation.


\begin{table*}[t]
\centering
\caption{Survey of PRNG algorithms and implementations. The test suite for statistical quality is the TestU01 suite.}
\label{tab:cost_prng}
\resizebox{1\textwidth}{!}{%

\begin{threeparttable}
\begin{tabularx}{1.12\textwidth}{lccccccc}
\toprule
\multirow{2}{*}{\textbf{PRNG Algorithms}}
& \multirow{2}{*}{\textbf{Platform}}
& \multirow{2}{*}{\textbf{State width}}
& \multirow{1}{*}{\textbf{{\#Multiplication}}}
& \multirow{1}{*}{\textbf{Single sequence}}
& \multicolumn{2}{c}{\multirow{1}{*}{\textbf{Multiple sequences}}}

& \multirow{1}{*}{\textbf{Critical resources}}  \\

&&& \multirow{1}{*}{\textbf{{for $n$ instances}}}
&\multirow{1}{*}{\textbf{Statistical quality}}
&\multirow{1}{*}{\textbf{Methods}}
&\multirow{1}{*}{\textbf{Statistical quality}}
&\multirow{1}{*}{\textbf{on FPGAs}}\\ \midrule

Li et al.~\cite{li2011software}
&\multirow{3}{*}{FPGA}
&  $19937$ & 0 &   Crushable     & Substream   & Crushable    & Block RAMs\\

Dalal et al.\cite{dalal2008fast}
&
&  $19937$ & 0 &   Crushable     & Substream   & Crushable     & Block RAMs\\

LUT-SR~\cite{thomas2012lut}
&
&  $19937$ & 0 &   Crushable     & Substream   & Crushable    & LUTs     \\


\midrule

Philox4\_32~\cite{salmon2011parallel}
& \multirow{2}{*}{GPU/CPU}
&  $256$   & $ 6  n $ & Crush-resistant & Multistream &  Crush-resistant    & DSP slices\\

MRG32k3a~\cite{l1999good}
&
&  $384$   & $4  n$ & Crush-resistant & Substream   & Crushable    & DSP slices\\

\midrule

Xoroshiro128**~\cite{blackman2018scrambled}
&\multirow{3}{*}{CPU}
& $128$    & $ 2  n $ & Crush-resistant & Substream   &  Crush-resistant    & DSP slices\\

PCG\_XSH\_RS\_64~\cite{o2014pcg}
&
&  $64$    & $ n$ & Crush-resistant & Multistream & Crushable     & DSP slices\\

LCG64~\cite{marsaglia2003xorshift}
&
&  $64$    & $ n $ & Crushable       & Multistream  & Crushable    & DSP slices\\


\midrule

\textbf{ThundeRiNG}
& FPGA
& $192$    & $\bm{1}$ & \textbf{Crush-resistant}        & Multistream    & \textbf{Crush-resistant}    & LUTs\\
\bottomrule
\end{tabularx}%

\end{threeparttable}
}
\end{table*}


While the single sequence of random number generation on FPGAs has been well studied, extending it to generate multiple independent sequences of random numbers (MISRN) is nontrivial.
Despite that decades of studies on generating a single sequence of random numbers on FPGAs have achieved very high throughput and high quality of randomness, existing approaches for generating MISRN ~\cite{li2011software, dalal2008fast, durst1989using,hellekalek1998don, wu200625} either suffer from heavy resource consumption or fail to achieve independence among sequences.
First, the resources of FPGAs can easily become a limitation for the concurrent generation. The state transition stage usually adopts states with large space or complex nonlinear arithmetic operations in PRNG, which consumes the precious BRAM or DSP resources of FPGAs~\cite{li2011software, dalal2008fast}. Due to the heavy resource consumption, we cannot scale a large number of PRNG instances on a single FPGA.
Second, the correlation among multiple sequences leads to low quality of randomness~\cite{li2011software, dalal2008fast}. Sequences generated by the same type of PRNG tend to have correlation, diminishing the quality of randomness~\cite{durst1989using,hellekalek1998don}. In fact, the quality of the generated sequences of the previous two designs is not guaranteed~\cite{li2011software, dalal2008fast}, as they fail in some of the empirical tests such as TestU01~\cite{l2007testu01}.

To our best knowledge, none of the previous studies on FPGAs achieved high quality of randomness as required in many applications, or the high throughput and scalability for MISRN.
In this paper, we propose a high-throughput, high-quality, and scalable PRNG, called \textit{ThundeRiNG}, to tackle the aforementioned two challenges.
ThundeRiNG inherits {\em linear congruential generator}~\cite{lehmer1951mathematical} (LCG) that natively supports affine transformation to generate distinct sequences. {While the widely adopted LCG parallelization approaches such as state spacing~\cite{wu200625} suffer from long-range correlation~\cite{hellekalek1998don,durst1989using} and efficiency problems~\cite{deng2012large,walidainy2015improved}, we identified an opportunity to share the most resource-consuming stage between multiple PRNG instances on FPGAs, and found a technique to eliminate the correlation among the concurrently generated sequences.}

Specifically, ThundeRiNG makes the following contributions:
\begin{itemize}[leftmargin=*]
\item It enables state sharing for generating multiple independent sequences to solve the resource inefficiency problem when increasing the number of PRNGs instantiated on FPGAs.
\item It has a resource-efficient decorrelation mechanism to remove the correlation among sequences to guarantee the quality of randomness.
\item It consumes a constant number of DSPs for a varied number of generated sequences and achieves up to 655 billion
random numbers per second (20.95 Tb/s), without compromising the quality of randomness.

\item Compared with the state-of-the-art GPU implementation, it delivers up to $10.62\times$ performance improvement.
Furthermore, we demonstrate its effectiveness on two real world applications with delivering up to $9.15\times$ speedup on throughput and $26.63\times$ power efficiency.
\end{itemize}

The rest of the paper is organized as follows. Section~\ref{sec:background} introduces the background and related work. Section~\ref{sec:proposed_method} presents the design, followed by the implementation details on FPGA in Section~\ref{sec:imp}. We present the experimental results and case studies in Sections~\ref{sec:eva} and~\ref{sec:case}, respectively. We conclude this paper in Section~\ref{sec:conclusion}.


\section{Background and Related Work}\label{sec:background}

In this section, we present the quality criteria and review existing approaches for generating MISRN (summarized in~\Cref{tab:cost_prng}).

\subsection{PRNG Quality Criteria} \label{subsec:randomness}
The statistical randomness of the generated sequences is the most important quality criterion of PRNG.

\noindent
\textbf{Statistical Randomness.}
Randomness is hard to measure due to its considerable evaluation space. Instead, statistical randomness is commonly used for measuring the quality of a PRNG. A numerical sequence is statistically random if it contains no recognizable pattern or regularities~\cite{wiki:sr}. In essence, statistical randomness indicates how well the successive outputs of PRNG behave as independent and identically distributed (i.i.d) random variables. 

\noindent
\textbf{Statistical Randomness Testing.}
There are two testing approaches for statistical randomness: theoretical test and empirical test. Theoretical test is a kind of prior test based on the knowledge of the PRNG algorithm, and thus it is not applicable for PRNGs without clear mathematical modeling~\cite{10.5555/270146}. In contrast, the empirical test is able to extract recognizable patterns from the generated sequences without knowledge of detailed mathematical modeling, and it is widely adopted in the evaluation of PRNGs~\cite{marsaglia1995diehard,bassham2010sp,brown1994security}.

The TestU01 suite\cite{l2007testu01}, which is the most stringent empirical test suite, has been widely used and has become the standard for testing the statistical quality of a PRNG. It contains several test batteries, including SmallCrush (with 10 tests), Crush (with 96 tests), and BigCrush (with 160 tests). PRNGs that pass all tests in those test batteries can be referred as \textbf{\textit{crush-resistant}}, indicating a good quality of statistical randomness, while the PRNGs fail to do that is called \textbf{\textit{crushable}}, indicating that recognizable patterns exist~\cite{salmon2011parallel}. All FPGA-based PRNGs (except this work) in~\Cref{tab:cost_prng} are crushable even for single sequence generation.

\subsection{Multiple Sequence Generation Methods}\label{subsec:lcg}
In supporting MISRN, existing PRNGs usually adopt one of the two methods: substream and multistream (as shown in the column \textbf{Methods} for ``Multiple sequences`` in Table~\ref{tab:cost_prng}).

\noindent
\textbf{Substream.}
{Substream based solutions equally divide the state space into many non-overlapped subspaces to generate disjoint logical sequences.
The practical criterion to guarantee nonoverlapping is maintaining at least $2^{63}$ skipped elements among the logical sequences}~\cite{blackman2018scrambled}.
This method is widely adopted in existing works~\cite{matsumoto1998mersenne,haramoto2008efficient,panneton2006improved,blackman2018scrambled,li2011software}.

\noindent
\textbf{Multistream.}
The multistream approach is that the same PRNG module is instantiated multiple times, and the instances run concurrently with different parameters for generating multiple distinct streams.
All prior cited FPGA-based PRNGs use the substream solution, and only CPU/GPU based solutions adopt multistream based solutions,
e.g., Philox4\_32~\cite{salmon2011parallel} and PCG\_XSH\_RS\_64~\cite{o2014pcg}.

\subsection{Challenges of MISRN Generation on FPGAs}

\Cref{tab:cost_prng} summarizes the existing FPGA-based PRNGs as well as CPU/GPU based PRNGs. We revisit those algorithms for potential adoption and thus analyze each method in terms of the state width, number of multiplications, statistical quality, multi-sequence generation method, and critical resources. We identify the limitations of existing works and the open challenges of multi-sequence generation on FPGAs.

\subsubsection{Challenge 1: correlation among sequences.}
{A common issue with existing methods of MISRN is the correlation between sequences. This leads to poor statistical randomness and may not satisfy the application requirements~\cite{entacher1999parallel,hellekalek1998don}.
Correlation violates the independence of the generated sequences and leads to inaccurate or biased results even in the simplest applications~\cite{entacher1999parallel,hellekalek1998don}.} All FPGA-based solutions are crushable for both single and multiple sequence generation.

\subsubsection{Challenge 2: high throughput via parallelism.}
To increase throughput, multiple pseudo-random sequences must be generated concurrently. The recent methods~\cite{blackman2018scrambled,matsumoto1998mersenne,li2011software,o2014pcg,salmon2011parallel} require instantiating one PRNG module for generating one sequence. On FPGAs, this translates to a significant resource consumption that is linearly proportional to the number of sequences. As a single FPGA has limited resources, this will severely limit the number of sequences that can be generated concurrently on one FPGA. Even worse, PRNGs usually require either a large state width (e.g., the 19937-bit state of FPGA-based solutions shown in~\Cref{tab:cost_prng}), or complex arithmetic (e.g., the multiplication operation of CPU-based solutions shown in~\Cref{tab:cost_prng}) to improve the randomness of the output.
When instantiating on FPGAs, the large state width will consume the BRAM resources of FPGAs, and the complex arithmetic consumes heavily on DSPs. As a result, directly implementing existing CPU/GPU-based PRNGs on the FPGAs can be resource- and throughput-constrained.




\newcommand{\indep}{\perp \!\!\! \perp}
\newcommand{\notindep}{\not\!\perp\!\!\!\perp}
\newcommand{\Mod}[1]{\ (\mathrm{mod}\ #1)}

\newenvironment{conditions*}
  {\par\vspace{\abovedisplayskip}\noindent
   \tabularx{\columnwidth}{>{$}l<{$} @{}>{${}}c<{{~}$}@{} >{\raggedright\arraybackslash}X}}
  {\endtabularx\par\vspace{\belowdisplayskip}}

\section{Design of ThundeRiNG}\label{sec:proposed_method}

We describe the design of our proposed ThundeRiNG in this section, followed by the implementation details on FPGA in the next section. As far as we know, ThundeRiNG is the first FPGA-based solution that solves the two above-mentioned challenges, providing a high-throughput, high-quality PRNG.
ThundeRiNG is based on a well-studied {\em linear congruential generator} (LCG) PRNG~\cite{lehmer1951mathematical}.
To ensure the highest quality of the output, ThundeRiNG adopts a resource-efficient decorrelator that removes the correlation between the multiple sequences generated by parameterizing the LCG via increments.
To scale the throughput, ThundeRiNG uses state sharing to reduce resource consumption when instantiating a massive number of PRNG instances on FPGAs.

\subsection{Parameterizing LCG via Increment } \label{subsec:lcgincrement}
The LCG algorithm has three parameters, labelled as $m$, $a$ and $c$, where $m$ is the \textbf{\textit{modulus}} ($m \in \mathbb{Z}^{+} $), $a$ is the \textbf{\textit{multiplier}} ($0 < a < m$) and $c$ is the \textbf{\textit{increment}} ($0 \leq c < m$). The set of sequences generated with the same $a, m, c$ parameters are represented as $\mathbb{X}_{a,m} = \lbrace X_{a,m}^{c}, \: ...\mid{}  {0 \leqslant c < m}  \rbrace$, and the generation of each instance of $X_{a,m}^c$ is defined in the following equation:
\begin{align}
 x_{n+1} &= (a \cdot x_{n} + c)\mod{ m }, ~~n \geq 0  \label{eq:lcg_s} \\
 u_{n+1} &= truncation(x_{n+1}) \label{eq:lcg_o}
\end{align}

\Cref{eq:lcg_s} is the state transition function, and
~\Cref{eq:lcg_o} is the output function, which conducts a simple truncation on $x_{n}$ to guarantee that the state space is larger than the output space~\cite{o2014pcg}.

Instead of parameterizing PRNGs with modulus and multiplier~\cite{durst1989using}, ThundeRiNG explores parameterization via the increment $c$, to enable a resource-efficient multiple sequence generation method. However, the generated distinct sequences also suffer the severe correlation problem~\cite{percus1989random}, which motivates us to develop a decorrelation approach in the following subsection.


\subsection{Decorrelation}\label{subsec:decorrelation}

As shown in previous studies~\cite{percus1989random}, the sequences generated by LCGs with different increments still have severe correlations.
There have been several existing approaches to eliminate the correlations, such as dynamic principal component analysis~\cite{rato2013advantage} or Cholesky matrix decomposition~\cite{kessy2018optimal,ghanem2003stochastic}.
However, these methods involve massive computation, which is resource inefficient and unpractical for high-throughput random number generation on FPGAs.

{The number of possible combinations of sequences generated by LCG is very large, leading the correlations among multiple sequences hard to analyze}. Therefore, we first consider the correlation between two sequences to simplify the problem, and then we extend it to multiple sequences.

\subsubsection{Decorrelation on two sequences}
Yao's XOR Lemma~\cite{yao1982theory} states that the hardness of predication is amplified when the results of several independent instances are coupled by the exclusive disjunction.
Therefore, we use the XOR operation to amplify the independence of the generated sequences by LCG.
Specifically, we first adopt a light-weight but completely different algorithm from LCG for the generation of two sequences, even if they are weakly correlated, and then combine them to the sequences generated by the LCG algorithm with bitwise XOR operations.

\Cref{thm:1} gives the theoretical proof of the improved independence for the newly generated sequences with our approach.


\begin{theorem} \label{thm:1}

{Suppose $X_{a,m}^{c_1} = \seq{x_{n}^{c_1}}{n\in\mathbb N}$  and $X_{a,m}^{c_2} = \seq{x_{n}^{c_2}}{n\in\mathbb N}$ are two distinct sequences belong to $\mathbb{X}_{a,m}$, 
and there are two weakly correlated sequences $I = \seq{i_n}{n\in\mathbb N}$ and $J = \seq{j_n}{n\in\mathbb N}$ 
,which are uncorrelated with the sequences in $\mathbb{X}_{a,m}$.
Then the correlation between the combined sequences, $ Z^{1} = \seq{x_{n}^{c_1} \oplus i_n}{n\in\mathbb N} $ and $ Z^{2}= \seq{x_{n}^{c_2} \oplus j_n}{n\in\mathbb N} $ is weaker than the correlation between $X_{a,m}^{c_1}$ and $X_{a,m}^{c_2}$.}

\end{theorem}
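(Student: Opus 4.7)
The plan is to make the informal word "correlation" precise as a bit-level bias and then invoke the XOR lemma of Yao (already cited in the surrounding text) in its single-bit form. Since $\oplus$ acts bitwise, the statement about the integer-valued sequences reduces, for every bit position $k$, to a statement about four $\{0,1\}$-valued random variables. Writing
\[
\varepsilon_X \;=\; \bigl|\Pr[X_n^{c_1}[k]\oplus X_n^{c_2}[k]=0] - \tfrac12\bigr|,
\qquad
\varepsilon_{IJ} \;=\; \bigl|\Pr[i_n[k]\oplus j_n[k]=0] - \tfrac12\bigr|,
\]
I would take the bias $\varepsilon_X$ as the working definition of the bit-level correlation between $X_n^{c_1}$ and $X_n^{c_2}$ (under the mild assumption that the marginals of each LCG bit are close to uniform, $\varepsilon_X$ agrees up to a constant with $|\mathrm{Cov}(X^{c_1}[k],X^{c_2}[k])|$). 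The hypothesis that $I$ and $J$ are only weakly correlated is then read as $\varepsilon_{IJ} < \tfrac12$.

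The key step is a one-line calculation. By the assumed independence of $(I,J)$ from $(X_n^{c_1},X_n^{c_2})$, the pair $X_n^{c_1}[k]\oplus X_n^{c_2}[k]$ is independent of $i_n[k]\oplus j_n[k]$. Since
\[
(Z_n^{1}\oplus Z_n^{2})[k] \;=\; \bigl(X_n^{c_1}[k]\oplus X_n^{c_2}[k]\bigr)\,\oplus\,\bigl(i_n[k]\oplus j_n[k]\bigr),
\]
the standard single-bit XOR lemma gives
\[
\bigl|\Pr[(Z_n^{1}\oplus Z_n^{2})[k]=0]-\tfrac12\bigr| \;=\; 2\,\varepsilon_X\,\varepsilon_{IJ}.
\]
Under the weak-correlation hypothesis $2\varepsilon_{IJ} < 1$, this is strictly less than $\varepsilon_X$, so the bit-level correlation of $(Z^1,Z^2)$ is strictly smaller than that of $(X^{c_1},X^{c_2})$. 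Aggregating over bit positions then yields the corresponding statement at the word level, which is the conclusion of the theorem.

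The main obstacle is that the theorem is stated informally: "weaker than" is not pinned to a particular functional, and "weakly correlated" has no quantitative content. The plan therefore commits to bit-bias (equivalently, covariance for near-uniform Bernoulli bits) as the correlation measure, and extends to the integer sequences via Fourier/Walsh expansion on $\{0,1\}^w$ — every coordinate of the Walsh transform of $Z^1\oplus Z^2$ factors across the two independent contributions, so the contraction factor $2\varepsilon_{IJ}$ applies uniformly. A secondary subtlety is that LCGs are known to have weakly mixing low-order bits, for which $\varepsilon_X$ is relatively large; it is precisely on those bits that the multiplicative shrinkage $2\varepsilon_X\varepsilon_{IJ} < \varepsilon_X$ is most beneficial, which is in fact the quantitative justification for the decorrelator design. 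Lifting the conclusion beyond pairwise correlation to joint independence of many sequences — needed in Section 3.3 and in the TestU01 evaluation — would require iterating this argument and controlling error accumulation, and I expect that to be the truly delicate part of any generalization.
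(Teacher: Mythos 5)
Your proposal is correct, and it rests on the same underlying mechanism as the paper's proof --- the multiplicativity of the $\{0,1\}\to\{-1,+1\}$ transform under XOR, combined with independence of the decorrelator pair $(I,J)$ from the LCG pair so that the joint expectation factors --- but the bookkeeping is genuinely different. The paper works directly with the Pearson correlation of the (implicitly binary) sequences: it sets $h(X)=1-2X$, expands $\mathrm{cov}(X_{a,m}^{c_1}\oplus I,\,X_{a,m}^{c_2}\oplus J)$ through $h$, factors $E[h(X_{a,m}^{c_1})h(X_{a,m}^{c_2})h(I)h(J)]=E[h(X_{a,m}^{c_1})h(X_{a,m}^{c_2})]\cdot E[h(I)h(J)]$, and then drops small terms using $\mu\approx 1/2$ to reach the approximate product law $|\rho_Z|\approx|\rho_{X_{a,m}}|\cdot|\rho_{(I,J)}|<|\rho_{X_{a,m}}|$. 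You instead fix a bit position, take the bias of the XOR of the two streams as the correlation measure, and apply the piling-up identity to obtain the exact value $2\varepsilon_X\varepsilon_{IJ}$, aggregating over bit positions by a Walsh-type argument. Your route buys an exact identity and an explicit treatment of the word-valued outputs, which the paper's proof glosses over by silently treating the LCG outputs as single near-uniform bits; the paper's route buys a conclusion phrased directly in the Pearson coefficient that its empirical pairwise-correlation tests measure, at the cost of several uncontrolled approximations. Two caveats apply equally to both arguments: the factorization step needs independence of $(I,J)$ from the LCG pair, which is stronger than the ``uncorrelated'' stated in the theorem (the paper's proof makes the same silent upgrade), and strict improvement requires $\varepsilon_X>0$ (respectively $|\rho_{X_{a,m}}|$ near $1$), i.e.\ that the LCG pair is in fact correlated.
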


\begin{proof}[Proof]
    First, we consider two binary uniformly distributed sequences, $X$ and $Y$. As we cannot directly calculate the probability based on XOR, we transform the XOR operator to multiplication~\cite{goldreich1995yao}. Specifically, {we define a sequence transformation, $h(X) = 1 - 2X = \seq{1 - 2 \cdot x_{n}}{n \in \mathbb N} $, which maps the value of the elements in $X$ from $\set{0,1}$ to $\set{-1, 1}$ }. Then we have
    \begin{align}
        h(X \oplus Y ) =h(X) \cdot h(Y).
    \end{align}
    The mathematical expectation ($E$), variance ($var$) of $h(X)$, and the covariance ($cov$) between $h(X)$ and $h(Y)$ are calculated as follows:
    \begin{align}
        E(h(X)) &= 1 - 2E(X) \label{eq:trick_1}\\
        var(h(X)) &= 4 var(X) \label{eq:trick_2} \\
        cov(h(X),h(Y)) &= 4 cov(X, Y) \label{eq:trick_3}
    \end{align}
    As X and Y are uniformly distributed, then we have
    \begin{align}
        E(X) = \mu_{X} \approx 1/2 \label{eq:e_approx}.
    \end{align}
    Since $var(X) = E(X^2) - (E(X))^2$, we can approximate the variance of X:
    \begin{align}
        var(X) = \mu_{X} - (\mu_{X})^2 \approx 1/4 \label{eq:approx_var}
    \end{align}
    Therefore, we can calculate the variance of the new sequence $X \oplus Y$ by ~\Cref{eq:trick_2,eq:approx_var}:
    \begin{align}
        var(X \oplus Y) &= \frac{var(h(X \oplus Y))}{4} \nonumber
        = \frac{1}{4} \cdot \bigg(var(h(X)) \cdot var(h(Y))   \nonumber\\
        & + var(h(Y))\cdot E[h(X)]^2 +var(h(X))\cdot  E[h(Y)]^2\bigg)\nonumber\\
        & = \frac{var(h(X)) \cdot var(h(Y))}{4}   \nonumber\\
        &= 4\cdot var(X) \cdot var(Y)  \approx 1/4
        \label{eq:approx_xor_var}
    \end{align}
    Taking two sequences in $Z$(in the Theorem definition), their correlation $\rho_{Z}$ can be represented by the definition of correlation:
    \begin{align}
        \rho_{Z} =\frac{cov(X_{a,m}^{c_1} \oplus I, X_{a,m}^{c_2} \oplus J)}{\sqrt{var(X_{a,m}^{c_1} \oplus I) \cdot var(X_{a,m}^{c_2} \oplus J) }}
        \label{eq:rho}
    \end{align}
    \Cref{eq:rho} can be further approximated using~\Cref{eq:approx_xor_var}:
    \begin{align}
        \rho_{Z} \approx 4 \cdot cov(X_{a,m}^{c_1} \oplus I, X_{a,m}^{c_2} \oplus J)  \label{eq:approx_var_res}
    \end{align}
    where the covariance can be rewritten as
    \begin{align}
         cov(&X_{a,m}^{c_1} \oplus I, X_{a,m}^{c_2} \oplus J) \nonumber\\
        &= \frac{cov[h(X_{a,m}^{c_1} \oplus I), h(X_{a,m}^{c_2} \oplus J)]}{4} \nonumber\\
        &= \frac{cov[h(X_{a,m}^{c_1}) \cdot h(I), h(X_{a,m}^{c_2}) \cdot h(J)]}{4}  \nonumber\\
        &=
        \!\begin{aligned}[t]
            &\frac{1}{4} \cdot \bigg(E[h(X_{a,m}^{c_1}) h(X_{a,m}^{c_2}) h(I) h(J)] \nonumber\\
            &- E[h(X_{a,m}^{c_1}) h(I))] E[h(X_{a,m}^{c_2}) h(J)] \bigg).
        \end{aligned}
        &\\ \label{eq:xor_simple_1}
    \end{align}
    As $X_{a,m}$ is independent of $I$ and $J$, the first item in~\Cref{eq:xor_simple_1} can be represented by their covariances:
    \begin{align}
        E[&h(X_{a,m}^{c_1}) h(X_{a,m}^{c_2}) h(I) h(J)]  \nonumber\\
        &=E[h(X_{a,m}^{c_1}) h(X_{a,m}^{c_2})] \cdot E[h(I) h(J)] \nonumber\\
        &=
        \!\begin{aligned}[t]
            &\bigg(cov[h(X_{a,m}^{c_1}),h(X_{a,m}^{c_2})] + E[h(X_{a,m}^{c_1})]E[h(X_{a,m}^{c_2})] \bigg) \nonumber\\
            &\cdot  \bigg(cov[h(I),h(J)] + E[h(I)]E[h(J)] \bigg)
        \end{aligned}
        &\\ \label{eq:cov_intermiddle}
    \end{align}
    Similar with~\Cref{eq:approx_var_res}, we use the correlation ($corr$) to replace the covariance items in~\Cref{eq:cov_intermiddle}:
    \begin{align}
        E[&h(X_{a,m}^{c_1}) h(X_{a,m}^{c_2}) h(I) h(J)]  \nonumber\\
        &\approx
        \!\begin{aligned}[t]
            &\frac{1}{4} \cdot \bigg(corr[h(X_{a,m}^{c_1}),h(X_{a,m}^{c_2})] + 4E[h(X_{a,m}^{c_1})]E[h(X_{a,m}^{c_2})] \bigg) \\
            &\cdot  \bigg(corr[h(I),h(J)] + 4E[h(I)]E[h(J)] \bigg)
        \end{aligned}
        \label{eq:corr_intermiddle}
    \end{align}
    As $X_{a,m}^{c_1}$ and $X_{a,m}^{c_2}$ from the same LCG set, the difference of their expectations can be ignored. We use $\mu_{X_{a,m}}$ to represent their expectation:
    \begin{align}
        E(X_{a,m}^{c_1}) = E(X_{a,m}^{c_2}) =\mu_{\mathbb{X}_{a,m}}
    \end{align}
    Here, $\rho_{\mathbb{X}_{a,m}}$ represents the correlation of $X_{a,m}^{c_1}$ and $X_{a,m}^{c_2}$, and $\rho_{(I,J)}$ represents the correlation of $I$ and $J$, Finally, combining ~\Cref{eq:approx_var_res,eq:xor_simple_1,eq:cov_intermiddle,eq:corr_intermiddle},~\Cref{eq:rho} can be simplified as
    \begin{align}
       \rho_{Z} &\approx \rho_{X_{a,m}} \cdot \rho_{(I,J)} + \rho_{X_{a,m}} \cdot (1 - 2\mu_{(I,J)}) + \rho_{(I,J)} \cdot (1 - 2\mu_{\mathbb{X}_{a,m}}).
    \label{eq:final}
    \end{align}
    The expectations of the sequence $I$, $J$ and $X_{a,m}$ are very close to $1/2$. Hence, the terms $\rho_{\mathbb{X}_{a,m}} \cdot (1 - 2\mu_{(I,J)})$ and  $\rho_{(I,J)} \cdot (1 - 2\mu_{\mathbb{X}_{a,m}})$ in~\Cref{eq:final} can be ignored. Finally, the correlation between $Z^{1}$ and $Z^{2}$ is simplified as
    \begin{align}
        \left| \rho_{Z} \right| & \approx \left| \rho_{X_{a,m}}\right| \cdot \left| \rho_{(I,J)} \right|. \label{eq:end}
    \end{align}

We assume that the sequences in the $\mathbb{X}_{a,m}$ are highly correlated,  so $\left| \rho_{\mathbb{X}_{a,m}}\right|$ is close to 1. $\left| \rho_{(I,J)}\right|$ is a small value ($< 1$) as the sequences $I$ and $J$ are only weakly correlated. Thus,  $\left|\rho_{Z}\right| <  \left| \rho_{\mathbb{X}_{a,m}}\right|$ from~\Cref{eq:end}, which proves that our XOR based approach decreases the correlation of the original LCG generated sequences.
\end{proof}

\subsubsection{Decorrelation on multiple sequences}
On top of~\Cref{thm:1}, we further consider the correlation among the sequences generated from more than two generators.
Typically, the independence of multiple random number sequences has two measurements, mutual independence and pairwise independence.
Mutual independence is a strong notion of independence.
It requires that each sequence is independent of all other sequences and any combination of other sequences in the set. Matsumoto et al.~\cite{matsumoto2000dynamic} have the hypothesis of mutual independence on the linear recurrence. However, there is no mathematically rigorous proof, and it is even impossible to evaluate all possible combinations in empirical tests when the number of sequences is large.
Pairwise independence indicates that any two sequences in the domain are independent of each other, which is mostly considered measurement in PRNG~\cite{l2007testu01}. Therefore, we only extend~\Cref{thm:1} to pairwise independence of multiple sequences, as given in~\Cref{thm:2}.


\begin{theorem} \label{thm:2}

Suppose there is a set of independent sequences, denoted as $\mathbb{K}$, of which all members are uncorrelated with $\mathbb{X}_{a,m}$.
Separately selecting $n$ sequences from $\mathbb{K}$ and $\mathbb{X}_{a,m}$, denoted as $\{K^i\}_{i=1}^{n}$ $(n>2)$ and $\{X_{a,m}^{c_i}\}_{i=1}^{n}$. The combined sequences, $\{Z^{i}= X_{a,m}^{c_i} \oplus K^i\}_{i=1}^{n}$ are pairwise independent from each other.

\end{theorem}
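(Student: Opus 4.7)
The plan is to reduce pairwise independence among the $n$ combined sequences to the two-sequence case already handled by Theorem~\ref{thm:1}. Since pairwise independence is defined purely in terms of pairs, I would fix arbitrary indices $i \neq j$ in $\{1, \dots, n\}$ and analyse only the pair
\begin{align*}
(Z^i, Z^j) \;=\; (X_{a,m}^{c_i} \oplus K^i,\; X_{a,m}^{c_j} \oplus K^j).
\end{align*}
Showing that these two are uncorrelated suffices, because ranging over all $\binom{n}{2}$ index pairs then gives the full conclusion.

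First I would check that the hypotheses of Theorem~\ref{thm:1} are satisfied by the pair $(K^i, K^j)$ taking the role of $(I, J)$. The assumption that the members of $\mathbb{K}$ are independent yields $\rho_{(K^i, K^j)} = 0$, which trivially fits within the ``weakly correlated'' hypothesis; the assumption that every element of $\mathbb{K}$ is uncorrelated with $\mathbb{X}_{a,m}$ is exactly what is needed for the factorizations of expectations in the Theorem~\ref{thm:1} derivation (in particular the step leading to Equation~\ref{eq:cov_intermiddle}) to go through unchanged. I would then apply the key bound from Theorem~\ref{thm:1},
\begin{align*}
\left|\rho_{Z^i, Z^j}\right| \;\approx\; \left|\rho_{X_{a,m}^{c_i}, X_{a,m}^{c_j}}\right| \cdot \left|\rho_{(K^i, K^j)}\right|,
\end{align*}
and substitute $\rho_{(K^i, K^j)} = 0$, which immediately collapses the right-hand side to zero. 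Because $i, j$ were arbitrary, every pair in $\{Z^i\}_{i=1}^{n}$ is (approximately) uncorrelated, giving pairwise independence under the working notion used throughout the paper.

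The main obstacle is not computational but conceptual: Theorem~\ref{thm:1} only controls the linear correlation coefficient, whereas genuine pairwise \emph{independence} of random sequences is strictly stronger than pairwise uncorrelatedness. I would resolve this either by (i) explicitly adopting the correlation-based notion of independence already used in Theorem~\ref{thm:1}, so that the reduction above is clean, or (ii) arguing at the joint-distribution level that, since $K^i$ and $K^j$ are independent and the XOR with any fixed-distribution sequence acts as a measurable transformation, the pair $(Z^i, Z^j)$ inherits the independence of $(K^i, K^j)$ up to the coupling contributed by $(X_{a,m}^{c_i}, X_{a,m}^{c_j})$. Option (i) keeps the proof short and consistent with the rest of Section~\ref{subsec:decorrelation}; option (ii) would require an additional probabilistic argument that goes beyond the second-moment machinery of Theorem~\ref{thm:1}, and is where any remaining technical difficulty would lie.
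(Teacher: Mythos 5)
Your proposal is correct under the paper's working notion of independence and rests on the same key lemma, but it is structured differently from the paper's own proof. The paper proves Theorem~\ref{thm:2} by induction on $n$: the base case $P(3)$ applies Theorem~\ref{thm:1} to each of the three pairs, and the inductive step $P(j)\Rightarrow P(j+1)$ pairs the new sequence $Z^{j+1}$ with each existing $Z^i$ and invokes Theorem~\ref{thm:1} again. You instead argue directly: fix arbitrary $i\neq j$, check that $(K^i,K^j)$ satisfies the hypotheses placed on $(I,J)$, and apply the bound $\left|\rho_{Z^i,Z^j}\right|\approx\left|\rho_{X_{a,m}^{c_i},X_{a,m}^{c_j}}\right|\cdot\left|\rho_{(K^i,K^j)}\right|$ with $\rho_{(K^i,K^j)}=0$. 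Since pairwise independence is by definition a property of pairs, your direct quantification over all $\binom{n}{2}$ pairs is logically sufficient, and the paper's induction scaffolding is essentially redundant --- each inductive step does nothing beyond what your direct argument already does. Your version is in one respect tighter: the paper's proof asserts ``independence'' of each pair merely from Theorem~\ref{thm:1}, which by itself only guarantees a \emph{weaker} correlation than that of the underlying LCG pair, whereas your substitution of the exact independence of $K^i$ and $K^j$ drives the approximate correlation to zero. Finally, the caveat you raise --- that vanishing correlation is weaker than genuine distributional independence, so one must either adopt the correlation-based notion used throughout Section~\ref{subsec:decorrelation} or supply a joint-distribution argument --- applies equally to the paper's own proof, which silently makes choice (i); flagging it is a point in your favour rather than a gap relative to the paper.
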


\begin{proof}[Proof]

We give the induction proof. Let $P(n)$ be the statement "the $n$ decorrelated sequences $\{Z^{i}= X_{a,m}^{c_i} \oplus K^i\}_{i=1}^{n}$ are pairwise independent". We will prove that $P(n)$ is true for all $n>2$. 

We first prove $P(3)$ is true, which means three distinct sequences are pairwise independent (denote them as $Z^1$, $Z^2$ and $Z^3$).
Based on our definition, we have $Z^1 =X_{a,m}^{c_1} \oplus K^1$ and $Z^2 =X_{a,m}^{c_2} \oplus K^2$. $K^1$ and $K^2$ belong to $\mathbb{K}$.
Hence based on~\Cref{thm:1}, we can get that $Z^1$ is independent from $Z^2$. In a similar way, $Z^1$ is independent from $Z^3$, and $Z^2$ is independent from $Z^3$. Therefore, $P(3)$ is true.

For the inductive step, assume $P(j)$ true, we will prove that $P(j + 1)$ is also true. Compared to $P(j)$, $Z^{j + 1}$ is the newly introduced sequence, of which the $K^{j+1}$ is independent with the sequences in $\{K^i\}_{i=1}^{j}$. 

Successively combining $Z^{j + 1}$ with all sequences in $\{Z^{i}\}_{i=1}^{j}$ into pairs, and adopting~\Cref{thm:1} on all of the pairs, we derive that $Z^{j + 1}$ is independent from all the members in  $\{Z^{i}\}_{i=1}^{j}$. Therefore, according to the definition of pairwise independence,  $P(j+1)$ is true, completing the proof.
\end{proof}

\begin{figure}[t]
     \centering
     \includegraphics[width=1\columnwidth]{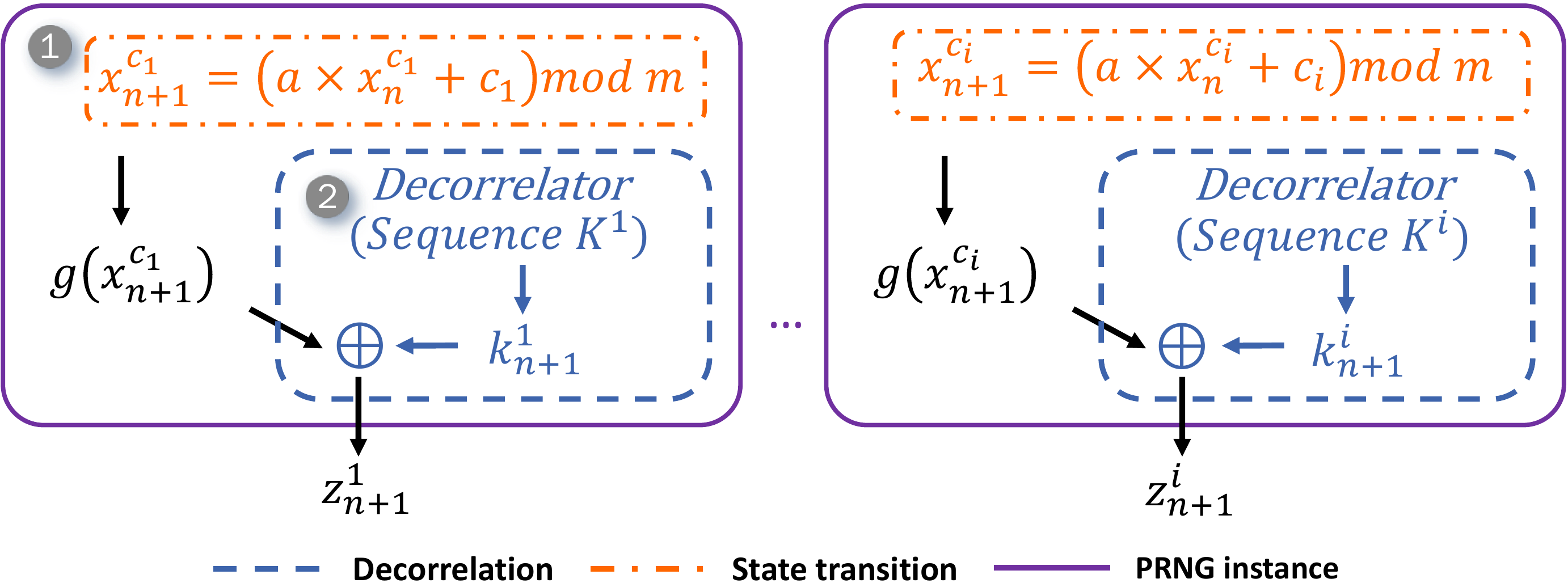}
     \caption{The decorrelation mechanism.}
     \label{fig:decorrelation}

\end{figure}

\subsubsection{The algorithm}
We propose our decorrelation method based on~\Cref{thm:1} and~\Cref{thm:2}, using the XOR operation to combine the correlated LCG sequences with a series of independent sequences.

The algorithm flow is shown in~\Cref{fig:decorrelation},
for the $\bm{i}$-th instance at Step $\bm{n+1}$ , it transits the old state from $x_{n}^{c_i}$ to the new state $x_{n+1}^{c_i}$ by the LCG algorithm, and the inside decorrelator generates an element $k_{n+1}^{i}$ and an XOR operation is performed to output $z_{n+1}^{i}$.

According to the proof in~\Cref{thm:2}, there are two theoretical constraints on the sequences generated by the decorrelator. Define $\mathbb{K}$ to be the set of all candidate sequences for the decorrelator. The constraints are:
\begin{enumerate*}[label=(\roman*)]
  \item Every sequence in~$\mathbb{K}$ should be independent of the sequences in LCG set $\mathbb{X}_{a,m}$.
  \item Any pair of sequences in~$\mathbb{K}$ should not be strongly correlated with each other.
\end{enumerate*}
These are the fundamental guidelines for choosing a suitable decorrelator.

Beyond the theoretical constraints, several practical factors could also be considered to reduce the selection space of the decorrelator when implementing the decorrelation method on FPGAs.
First, since the XOR operation could reduce the statistical bias~\cite{yao1982theory}, the sequences generated by the decorrelator do not need to have perfect statistical randomness.
Second, the decorrelator should be lightweight to be resource-efficient. For example, it is desirable to reduce the number of multiplications that are costly in FPGA resource consumption.
Finally, the decorrelator should produce massive uncorrelated sequences, adapting to the different required degree of parallelism.

Based on the theoretical constraints and practical considerations, we adopt the xorshift algorithm~\cite{marsaglia2003xorshift} as the decorrelator for the following reasons.
First, the generation process of xorshift is completely independent of the LCG algorithm, which guarantees the first theoretical constraint.
Second, xorshift supports the substream method to generate long-period logical sequences, which can avoid long-range correlations~\cite{haramoto2008efficient} and hence meets the second theoretical constraint.
Lastly, as xorshift is based on the binary linear recurrence, it only uses bit-shift and XOR operations that can be efficiently implemented on FPGAs.



\subsection{State Sharing Mechanism}\label{subsec:statesharing}

{The decorrelation method introduced in the previous section solves the correlation issue of the LCG seqences $\mathbb{X}_{a,m}$. 
To use it on FPGA platforms, each LCG sequence is generated by an independent calculation process, and it requires one multiplication and one addition operation during each step of the state transition. This can require a large number of hardware multipliers in MISRN.}
{To reduce the number of hardware multipliers, we propose the state sharing mechanism that reuses the intermediate results of state transition over distinct generators.} 

In order to enable intermediate result reuse, we further extend the LCG transition. Considering that $\xi_h$ is an addition transition with a given constant integer $h$ after the LCG transition (given in ~\Cref{eq:lcg_s}): 
\begin{align}
w_{n } =\xi_h(x_{n}) = (x_{n} + h)\mod{m} \label{eq:lt}
\end{align}
We can get the transition from $w_{n+1}$ to $w_{n}$ by expanding the modulus and replacing $x_{n}$ with~\Cref{eq:lcg_s}. That is
\begin{align}
w_{n+1} = [a \cdot w_n + (l\cdot m + c - a \cdot h)] \mod{m}, 
\end{align}
where $l$ is an integer introduced during the expansion of the modulus operation. The transition from $w_{n+1}$ to $w_{n}$ has the same form as LCG. As the multiplier $a$ is the same as the multiplier in~\Cref{eq:lcg_s}, the sequence $W$ generated by~\Cref{eq:lt} belongs to $\mathbb{X}_{a,m}$. This indicates that selecting different $h$ results in a unique sequence, which is the same as changing the increment $c$ of LCG. Hence, we have the following representation:
\begin{align}
w_{n+1} = \xi_h( \varphi_c(x_n)) =  \varphi_{(l\cdot m + c - a \cdot h)}(w_n) \label{eq:rt}
\end{align}
where $\varphi_c$ is the LCG transition with a specific increment $c$.

\Cref{eq:rt} allows us to share the output of $\varphi_c$ in the generation of multiple sequences. We illustrate this state sharing mechanism, as shown in~\Cref{fig:state_sharing}. We refer $\varphi_c$ as the root transition that can be shared among multiple sequences, and $\xi_h$ as the leaf transition. Every leaf transition uses the same root state from a root transition and occupies a unique number $h$ to guarantee that the sequences generated are distinct from others.

Combining with our decorrelation method, the flow of state sharing MISRN generation is as follows. First, the root transition generates an intermediate state at step $n+1$, $x_{n + 1}$.
Then, it is shared among $p$ instances.
Furthermore, the $i$-th instance transits to $x_{n + 1}$ by a unique leaf transition $\xi_{h_{i}}$ to get a unique leaf state $w_{n+1}^{i}$.
Finally, $w_{n+1}^{i}$ goes through the output stage and then couples with the corresponding output of the decorrelator (illustrated in~\Cref{subsec:decorrelation}) to produce a random number ${z'}_{n+1}^{i}$.



\begin{figure}[h]
     \centering
     \includegraphics[width=1\columnwidth]{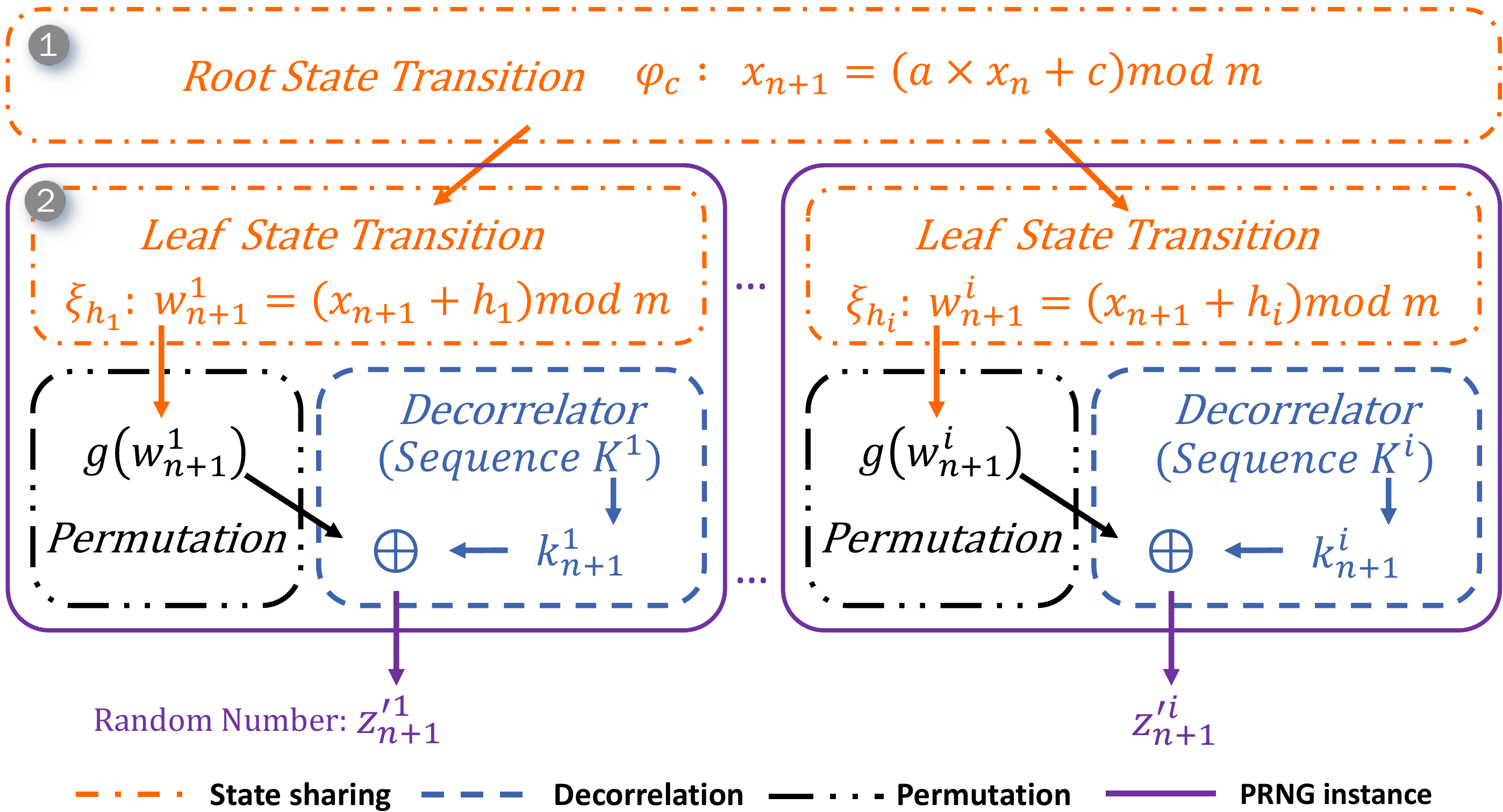}
     \caption{The State sharing mechanism.}
     \label{fig:state_sharing}
     \vspace{-10pt}
\end{figure}

To guarantee a maximum period of the sequences generated from the leaf transition, there is a constraint on the selection of $h$.
First, referring to Hull-Dobell Theorem~\cite{hull1962random}, $(l \cdot m + c - a \cdot h)$ 
must be an odd number. 
Second, as $m$ is power-of-two, $l \cdot m$ is always even. Hence, we only need to consider the parity of~$(c - a \cdot h)$.
Again, $c$, the increment in the root transition, which is also under the constraint of the Hull-Dobell Theorem, is an odd number.
Therefore, if $a \cdot h$ is an even number, $(l \cdot m - a \cdot h)$ is an odd number, and the Hull-Dobell Theorem holds. Finally, because $a$ is a prime number, an even $h$ will let the Hull-Dobell Theorem hold to guarantee the maximum period.


Comparing with all existing methods, which need $p$ times multiplication for $p$ distinct instances to transit the state at each step, our state sharing method \textbf{needs only one multiplication} along with $p$ addition operations.
Specifically, on the FPGA platform, our approach only needs one multiplier to support any number of PRNG instances.
This completely resolves the bottleneck of existing methods to increase the number of high-quality PRNG instances to improve the throughput.



\subsection{Permutation Function for Output} \label{subsec:permutation}

Since LCG is known to have weak statistical quality of low-order bits~\cite{l1999tables},
ThundeRiNG adopts the random rotation permutation in the output function $g$ as proposed by O’Neill~\cite{o2014pcg}. 
Basically, it performs a uniform scrambling operation and then remaps the bits to enhance the statistical quality. 
{The remapping operation is a bitwise rotation, of which the number of rotations is determined by the leaf state. 
As the leaf states are different from each other, the rotation operations of different sequences are different, which can further reduce the collinearity.}
We demonstrate the impact of the adopted permutation in~\Cref{subsubsec:correlation_evalutation}.

\newcommand*\circled[1]{%
}

\newcommand*\circledd[1]{\tikz[baseline=(char.base)]{
    \node[shape=circle,draw,inner sep=1pt] (char) {#1};}}

\section{Implementation of ThundeRiNG}\label{sec:imp}



\begin{figure}[]
     \centering
     \includegraphics[width=1\columnwidth]{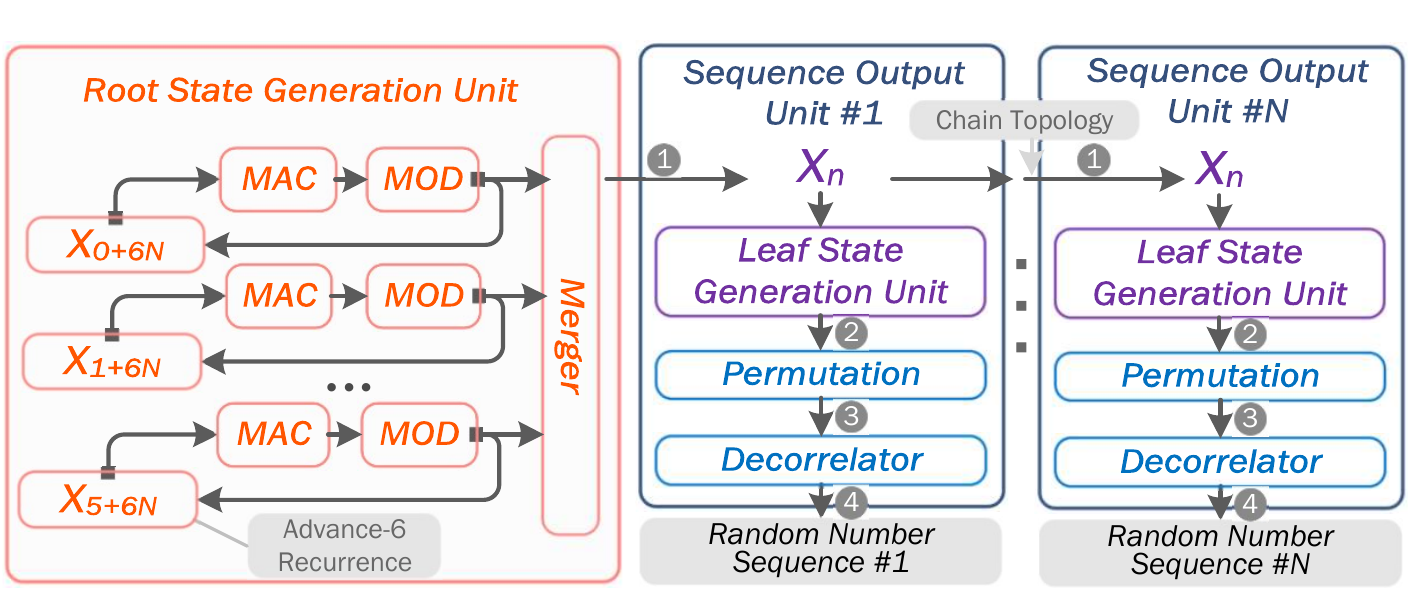}
     \caption{The overview of implementation of ThundeRiNG.}
     \label{fig:architecture_overview}
     
\end{figure}

\subsection{Architecture Overview}

The overall architecture of ThundeRiNG implementation on FPGA is shown in \Cref{fig:architecture_overview}. 
The architecture mainly consists of one {\em root state generation unit} (\textbf{RSGU}) and multiple {\em sequence output unit} (\textbf{SOU}). Each SOU is responsible for generating a single random sequence.
RSGU generates one root state per cycle by performing the transition function of LCG and shares it with all SOUs through a daisy-chaining topology~\cite{boutros2018you} interconnection (Step \ding{172}).
SOU is the main component to realize the decorrelation and state sharing as presented in Section~\ref{sec:proposed_method}. It is composed of three subcomponents, including the {\em leaf state generation unit} (LSGU), permutation unit, and decorrelator unit.
Given the shared root state $x_n$, each SOU executes the leaf state generation in LSGU to generate a leaf state (Step \ding{173}) and then executes the permutation function to generate a distinct LCG sequence (Step \ding{174}).
Finally, the decorrelator will output the random number sequence with the input from the permutation (Step \ding{175}).

\subsection{Root State Generation Unit (RSGU)} \label{subsec:stu}
The RSGU recursively generates root states by the following equation: $x_{n+1} =( a \cdot x_{n} + c) \mod{m}$, where the initial $x_{n}$ is initialized with a random constant.
Although the computation of one multiply-accumulate operation (MAC) is rather simple, generating one root state per cycle is nontrivial due to the true dependency~\cite{10.5555/502981} introduced by the recursive computation pattern.
The calculation of the next state is based on the previous state, which means the calculation of a new state can only be started after the calculation of the previous state is finished.
However, the latency of multiplication with DSPs on FPGA usually has multiple cycles (e.g., DSP48E2 takes 6 cycles as indicated in ~\Cref{fig:stu}(a).)
As a result, the throughput is limited by the long latency of multiplication on FPGAs.

Another possible way of finishing the MAC in one cycle is to use logic resources such as LUTs and registers to construct the MAC directly.
However, a large-bit multiplier costs a large number of LUTs.
Moreover, it consists of many long combinational logic paths
that result in a large propagation delay and lead to low frequency.
As shown in~\Cref{fig:stu}(b), the LUT-based 64-bit MAC can output $x_1$ after $x_0$ in one cycle. However, it runs at a much lower frequency, which degrades the generation performance.

{
Instead, ThundeRiNG hides the long multiplication latency of DSPs by leveraging the step-jump-ahead feature of LCG~\cite{brown1994random}.
Although recursive state generation has flow dependency, LCG supports the arbitrary advance recurrence, which could generate states in a jump-wise manner: $x_{n+i} = f_{adv-i}(x_{n})$, where $i$ is an arbitrary integer number and $f_{adv-i}$ is a new recurrence function whose parameters can be derived from the original LCG state transition function (\Cref{eq:lcg_s}).}
With it, the process of state generation can be partitioned into multiple portions and executed by multiple hardware units with different lags in parallel.
For example, to generate $\{x_0, x_1,...x_{6},x_{7}...\}$ as the output state sequence,
instead of using one state generator, we could use two dependent advance-2 state generators with the first one generating the state sequence $\{x_0, x_2, ..., x_{6},...\}$ and the second one generating the state sequence $\{x_1, x_3, ..., x_{7}, ...\}$.

The architecture of RSGU is shown in~\Cref{fig:architecture_overview}.
{The RSGU consists of multiple independent state generators. Each state generator contains a MAC unit for multiplication operations, a modulus unit for modulus operation, and registers for storing the temporal state during the recurrence process. 
State generators perform the same $f_{adv-i}$ on adjacent starting positions in parallel. Their outputs are further merged in the order of the original state sequence. 
Since the latency of state calculation is six cycles, we implement six state generators so that RSGU could generate one state per cycle, as indicated in~\Cref{fig:stu}(c).}

In addition, our design does not involve critical combinational logic paths, and the evaluation shows that the post-routing frequency can be scaled up to 550MHz using the HLS toolchain.
Parameters for advance-$i$ recurrence are calculated in compile-time following the algorithm proposed by Brown et al.~\cite{brown1994random}, which has $O(\log(i))$ complexity.
With $i$ equal to 6 in our case, the overhead of calculating these parameters is negligible.





\begin{figure}[t!]
    \centering
    \begin{subfigure}[t]{0.45\textwidth}
        \centering
        \includegraphics[width=1\linewidth]{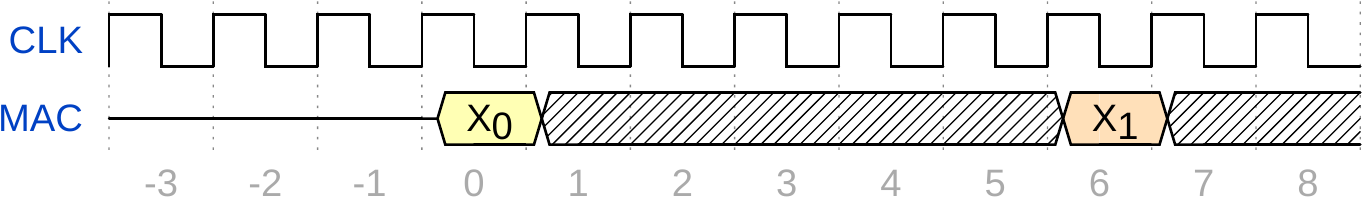}
        \caption{DSP-based state generation.}
    \end{subfigure}
    \smallskip
     \begin{subfigure}[t]{0.45\textwidth}
        \centering
        \includegraphics[width=1\linewidth]{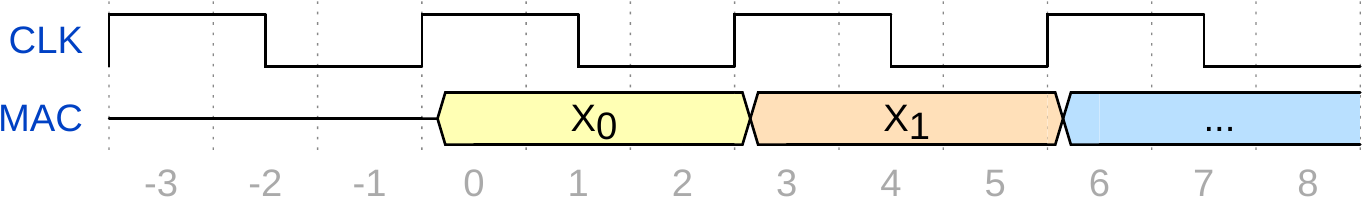}
        \caption{LUT-based state generation.} 
    \end{subfigure}
    \smallskip
    \begin{subfigure}[t]{0.45\textwidth}
        \centering
         \includegraphics[width=1\linewidth]{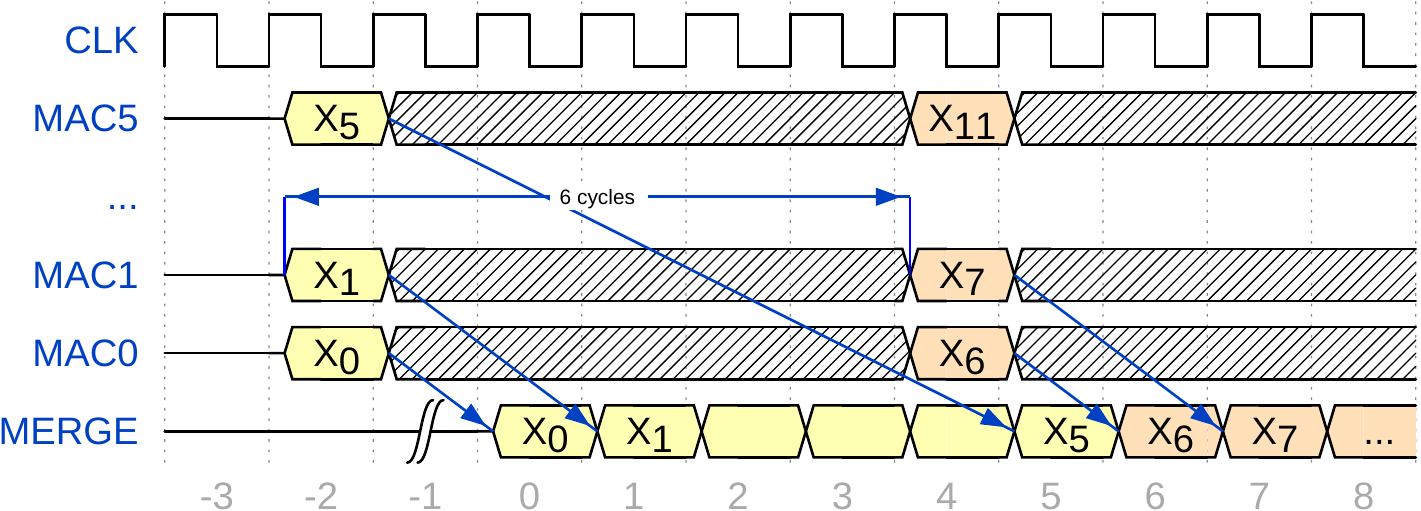}
        \caption{Proposed state generation with advance-6 recurrence.}
    \end{subfigure}
    \caption{The timing diagrams of different hardware designs for root state generation unit.}
    \label{fig:stu}
    
\end{figure}

\subsection{Sequence Output Unit (SOU)} \label{subsec:ppu}
{
With the root state as input, each SOU performs leaf state generation (LSGU), permutation, and decorrelation to finally output a random number sequence.
Each LSGU consists of an integer adder. The LSGU adder in the $i$-th SOU performs the addition of the root state with a unique constant value $h_i$, which is calculated at compile time by the approach described in~\Cref{subsec:statesharing}. 
The permutation unit is implemented by shift registers.
The rotation operation in the permutation function is divided into three stages to reduce the length of the combinatorial logic path. 
In the first stage, it calculates the the number of required bits for the rotation with the output from LSGU as input. 
In the second stage, it splits the rotation operation into two small rotations.
In the remaining stages, it performs these split rotations in parallel.
These stages are executed in a pipelined manner to guarantee a throughput of one output per cycle.
The decorrelator, which is a xorshift sequence generator and belongs to the linear feedback shift register generators, is implemented by the shift registers on FPGAs by following these previous works~\cite{panda2012fpga,bakiri2017ciprng}.
}

When increasing the number of SOUs to provide a massive number of sequences, state sharing by simple data duplication may cause a high fan-out problem since all SOUs require the same input from the RSGU.
This problem can be optimized by handcrafted register replication, but it loses the flexibility associated with HLS tools~\cite{de2018transformations}.
Therefore, we adopt a daisy chain scheme~\cite{boutros2018you} for the internal data transfer that each SOU receives data from the front SOU and then directly forwards the received data to the next SOU.
As there is no 1-to-N connection, it can keep the fan-out at a very low level at the cost of a slight increase in output latency. The extra latency is equal to the number of SOUs in the same topology times the period of the execution clock, which is only $1.82 \mu s$ for 1000 SOUs running at a frequency of 550MHz.


\subsection{Discussion}

{
Although ThundeRiNG is specifically designed for FPGAs, we also explore the possibility of generalizing our decorrelation and state sharing methods to CPUs and GPUs with the results presented in Section~\ref{subsubsec:compatility}. 
As this generalization is not the main focus of this paper, the implementations on CPUs and GPUs are rather straightforward, and we believe that more optimizations can be considered as future work.
For CPU implementation, we utilize a single thread for root state generation and multiple threads for parallel sequence output. The root states are generated in a batch manner so that the root states of each batch can fit in the last level cache for good data locality. In addition, we explore a double buffering scheme to overlap the root state generation and sequence output processes to increase throughput. 
For the GPU implementation, the state sharing mechanism leverages the shared memory hierarchy and hardware-accelerated arrive/wait barrier~\cite{cuRANDguide}. In one stream processor of GPU, we use a single thread for root state generation and multiple threads for parallel sequence output. The synchronization among them is managed through the efficient cooperative group interface provided by CUDA~\cite{cuRANDguide}.
}


\section{Evaluation}\label{sec:eva}

In this section, we evaluate both the quality and the throughput of ThundeRiNG.

\subsection{Experimental Setup}

\subsubsection{Hardware Platform}

The evaluation of statist`ical quality is conducted on a server with an Intel Xeon 6248R CPU and 768 GB DDR4 memory.
The throughput benchmarks and case studies are conducted on the following hardware platforms and corresponding development environments:

\begin{description}

\item[FPGA:] Xilinx Alveo U250 accelerator card with Vitis HLS Toolchain 2020.1. The number of available hardware resources are 2,000 BRAMs, 11,508 DSP slices, and 1,341,000 LUTs.
\item[GPU:] NVIDIA Tesla P100 GPU with CUDA Toolkit 10.1.
\item[CPU:] Two Intel Xeon 6248R CPUs (96 cores after hyperthreading enabled) with oneAPI Math Kernel Library 2021.2.   

\end{description}

\subsubsection{Parameter setting}\label{subsec:parameters}

The parameters of the root state transition include the modulus $m$, multiplier $a$ and increment $c$.
According to the existing empirical evaluation~\cite{l1999tables,o2014pcg}, we choose modulus $m$ as $2^{64}$, multiplier $a$ as $6364136223846793005$  and increment $c$ as $54$.
{
To guarantee scalability, we choose the xorshift128 generator as the decorrelator since it has the period of $2^{128} - 1$ and hence can generate $2^{64}$ nonoverlapping subsequences which satisfies the decorrelation requirement on $2^{63}$ distinct sequences~\cite{o2014pcg} of LCG with the state size of 64-bit.}
With the above parameters, ThundeRiNG is able to generate up to $2^{63}$ uncorrelated sequences, and the period of each sequence is up to $2^{64} - 1$.

\subsubsection{Evaluation Methods for Statistical Quality} 

To our best knowledge, there is no systematic benchmark for MISRN. Thus, we evaluate the quality of the MISRN generated by ThundeRiNG with two kinds of tests: intra-stream correlation and inter-stream correlation.
The intra-stream correlation indicates the dependence of random numbers from the same sequence (stream), while the inter-stream correlation indicates the dependence from different sequences.

\noindent
\textbf{Evaluation method on intra-stream correlation.}
Following previous studies~\cite{salmon2011parallel,o2014pcg}, we adopt a complete and stringent test suite, TestU01 suite~\cite{l2007testu01}, as the empirical test suite for statistical quality measurement. 
While existing works~\cite{li2011software,bakiri2018survey} only conducted tests with the Crush battery, we evaluate ThundeRiNG with the BigCrush battery, which is more extensive and has 64 more tests than the Crush battery~\cite{l2007testu01}.
{
Despite the BigCrush battery testing approximately $2^{38}$ random samples from one sequence, it can still miss regular patterns with a long period. 
We hence adopt a complimentary  test suite, PractRand~\cite{dotypractrand}, which allows for an unlimited test length of one sequence.}
PractRand runs in iterations. In each iteration, all tests are run at a given sample size. In the next iteration, the sample size is doubled until unacceptable failure occurs. Therefore, it is powerful to detect regular long-range patterns.
As ThundeRiNG can generate up to $2^{63}$ distinct sequences, it is impractical to evaluate them all. Hence, we randomly select 64 distinct sequences for evaluations.

\noindent
\textbf{Evaluation method on inter-stream correlation.}
{
As TestU01 and PractRand test suits are not designed for testing the inter-stream correlation~\cite{ismay2013testing}, we adopt the evaluation method from Li et al.~\cite{li2011software} that interleaves multiple sequences into one single sequence before evaluating the interleaved sequence with the BigCrush and PractRand test suites.
Specifically, the interleaved sequence is generated by selecting numbers from multiple sequences in a round-robin manner. 
Suppose there are $k$ sequences in total and the $i$-th sequence is $\set{x^{i}_{0}, x^{i}_{1}, ..., x^{i}_{n},...}$, the interleaved sequence will be $\set{x^{0}_{0},x^{1}_{0},...,x^{k}_{0},x^{0}_{1},x^{1}_{1},...x^{k}_{1},...}$. Besides TestU01 and PractRand, we also perform the Hamming weight dependency (HWD) test on the interleaved sequences. } 
{HWD, which is the dependency between the number of zeroes and ones in consecutive outputs, has been an important indicator of randomness and adopted by many test suites~\cite{l2007testu01}.
We use a powerful HWD testbench from Blackman et al.~\cite{blackman2018scrambled} that several existing crush-resistant PRNGs fail to pass~\cite{blackman2018scrambled}. }

{
Beyond this commonly adopted evaluation, to demonstrate the strength of our decorrelation method, we conduct a more stringent analysis on inter-stream correlation by three pairwise correlation evaluations. The experiments on pairwise correlation include Pearson’s correlation coefficient, Spearman’s rank correlation coefficient, and Kendall’s rank correlation coefficient~\cite{taylor1990interpretation}.
}

\begin{itemize}
  \item The Pearson correlation is also known as cross-correlation, which measures the strength of the linear relationship between two sequences.
  \item Spearman’s rank correlation represents the strength and direction of the monotonic relationship between two variables and is commonly used when the assumptions of Pearson correlation are markedly violated.
  \item Kendall rank correlation describes the similarity of the ordering of the data when sorted by each of the quantities.
\end{itemize} 
The outcomes of the three pairwise correlation tests range from -1 to +1, where -1 indicates a strong negative correlation, +1 for a strong positive correlation, and 0 for independence. As it is hard to traverse and analyze the pairwise correlations of all candidate sequences, we randomly select a pair of distinct sequences to calculate their coefficients and report the maximal correlation for 1000 such pairs.



\subsubsection{Methods for Throughput Evaluation}
{
We first evaluate the throughput of ThundeRiNG by varying the number of PRNG instances, and then comparing it with the state-of-the-art FPGA-based designs as well as CPU/GPU designs. 
For each experiment, we repeat the execution for 10 times and report the median throughput. 
}

{
There are, in general, two performance metrics for PRNG throughput evaluation: terabits generated per second (Tb/s), and giga samples generated per second (GSample/s).
Tb/s is commonly used in FPGA-based evaluation since FPGA-based PRNGs tend to have a large and arbitrary number of output bits (e.g., LUT-SR~\cite{thomas2012lut} uses 624-bit output) to increase the throughput.
GSample/s is used in CPU/GPU-based evaluation, where the size of a sample is usually aligned with 32-bit.
Hence, we use Tb/s when comparing with other FPGA-based implementations, and
GSample/s with the sample size of 32-bit when comparing with CPU/GPU-based implementations. 
For implementations with a larger sample size, we normalize it to 32-bit correspondingly. For example, Philox-4×32 uses the 128-bit round key~\cite{salmon2011parallel}, which produces $4 \times 32$-bit random numbers per output. We will count that as four samples per output, for a fair comparison. 
}


\subsection{Quality Evaluation}

\subsubsection{TestU01 and PractRand}
~\Cref{tab:overall_statistic_results} shows the testing results of ThundeRiNG and the state-of-the-art PRNG algorithms~\cite{blackman2018scrambled,salmon2011parallel,o2014pcg,l1999good,thomas2012lut} on BigCrush testing battery and PractRand test suite.

The results indicate that ThundeRiNG passes all tests in the BigCrush battery for both single sequence and multiple sequences.
The results of the PractRand suite show ThundeRiNG never encounters a defect even after outputting up to 8 terabytes random numbers. In summary, ThundeRiNG demonstrates a competitive quality of statistical randomness compared to the state-of-the-art PRNG algorithms.


\begin{table}[h!]

\caption{Statistical testing of ThundeRiNG and state-of-the-art PRNG algorithms on BigCrush and PractRand test suites.}
\label{tab:overall_statistic_results}
\resizebox{0.486\textwidth}{!}{%
\begin{tabularx}{0.61\textwidth}{lbcbc}
\toprule

\multirow{2}{*}{\textbf{Algorithms}}
&\multicolumn{2}{c}{\multirow{1}{*}{\textbf{Intra-stream correlation}}}
&\multicolumn{2}{c}{\multirow{1}{*}{\textbf{Inter-stream correlation}}} \\ [0.1em]\cline{2-5}

\\[-1em]
& \multicolumn{1}{c}{\textbf{~BigCrush~}}&\textbf{~PractRand~}&\multicolumn{1}{c}{\textbf{~BigCrush~}} & \textbf{~PractRand~}\\
\midrule

Xoroshiro128**~\cite{blackman2018scrambled}&\multicolumn{1}{c}{Pass} & >8TB & \multicolumn{1}{c}{Pass} & >8TB \\

Philox4\_32~\cite{salmon2011parallel}& \multicolumn{1}{c}{Pass} & >8TB & \multicolumn{1}{c}{Pass} & 1TB \\

PCG\_XSH\_RS\_64~\cite{o2014pcg}& \multicolumn{1}{c}{Pass} & >8TB & \multicolumn{1}{c}{105 failures} & 256MB \\

MRG32k3a~\cite{l1999good} & \multicolumn{1}{c}{Pass} & >8TB & \multicolumn{1}{c}{1 failure} & 2TB \\

LUT-SR~\cite{thomas2012lut} & \multicolumn{1}{c}{2 failures} & >1TB & \multicolumn{1}{c}{Pass} & 16MB \\

\textbf{ThundeRiNG}& \multicolumn{1}{c}{\textbf{Pass}} & \textbf{>8TB} & \multicolumn{1}{c}{\textbf{Pass}} & \textbf{>8TB}  \\

\bottomrule

%
%

\end{tabularx}%
}

\end{table}

\subsubsection{Pairwise Correlation Evaluation} \label{subsubsec:correlation_evalutation}

~\Cref{tab:correlation_results} shows the evaluation of pairwise correlation analysis when we enable different techniques: original LCG, original LCG + decorrelation, original LCG + permutation, and ThundeRiNG. The results indicate that the three kinds of correlations of multiple sequences generated by ThundeRiNG are much smaller than those by other design solutions, demonstrating the good statistical randomness of ThundeRiNG.

\begin{table}[h!]
\vspace{-2mm}
\caption{Pairwise correlation tests with different techniqes enabled.}
\label{tab:correlation_results}
\resizebox{0.476\textwidth}{!}{%
\begin{tabularx}{0.64\textwidth}{lcccc}

\toprule
\multirow{1}{*}{\textbf{Inter-stream }}
& \multirow{1}{*}{\textbf{LCG}} & \multicolumn{1}{c}{\textbf{LCG + }}  & \multicolumn{1}{c}{\textbf{LCG +}}   & \multirow{2}{*}{\textbf{ThundeRiNG}}
\\

\multirow{1}{*}{\textbf{Correlations}}
& \multirow{1}{*}{\textbf{Baseline}} & \multirow{1}{*}{\textbf{Decorrelation}}  & \multirow{1}{*}{\textbf{Permutation}} &
\\

\midrule
\multirow{1}{*}{{ Pearson }}
& {0.99764} & 0.00151 & 0.00019 &  \textbf{0.00003}
\\

\multirow{1}{*}{{ Spearman’s rank }}
& {0.99764} & 0.00150 & 0.00020 &  \textbf{0.00003}
\\
\multirow{1}{*}{{ Kendall’s rank }}
& {0.99843} & 0.00101 & 0.00013 &  \textbf{0.00002}
\\

\bottomrule

%

\end{tabularx}%
}

\end{table}

\subsubsection{Hamming Weight Dependency Evaluation}
~\Cref{tab:correlation_results_b} shows the evaluation results of Hamming weight dependency of different methods. The value of a result of the HWD test indicates the number of generated random numbers before an unexpected pattern is detected. Thus, a higher value of the result means better statistical quality.

We examine the impact of different techniques. If we only apply the permutation to the original LCG, there is no reduction in Hamming weight dependency, {although it reduces the collinearity significantly (shown in~\Cref{tab:correlation_results})}. In contrast, our decorrelation method significantly reduces the Hamming weight dependency.


\begin{table}[h!]
\vspace{-2mm}
\caption{Hamming weight dependency test with different techniqes enabled.}
\label{tab:correlation_results_b}
\resizebox{0.476\textwidth}{!}{%
\begin{tabularx}{0.64\textwidth}{lcccc}

\toprule
\multirow{1}{*}{\textbf{Inter-stream }}
& \multirow{1}{*}{\textbf{LCG}} & \multicolumn{1}{c}{\textbf{LCG + }}  & \multicolumn{1}{c}{\textbf{LCG +}}   & \multirow{2}{*}{\textbf{ThundeRiNG}}
\\

\multirow{1}{*}{\textbf{Correlations}}
& \multirow{1}{*}{\textbf{Baseline}} & \multirow{1}{*}{\textbf{Decorrelation}}  & \multirow{1}{*}{\textbf{Permutation}} &
\\

\midrule

\multirow{1}{*}{{ Blackman et al.~\cite{blackman2018scrambled}}}
&$1.25e+08 $ & $ > 1e+14$  & $1.25e+08 $  &  $\mathbf{> 1e+14}$
\\

\bottomrule

%

\end{tabularx}%
}

\end{table}

\begin{figure}[ht!]
     \centering
     \includegraphics[width=0.92\columnwidth]{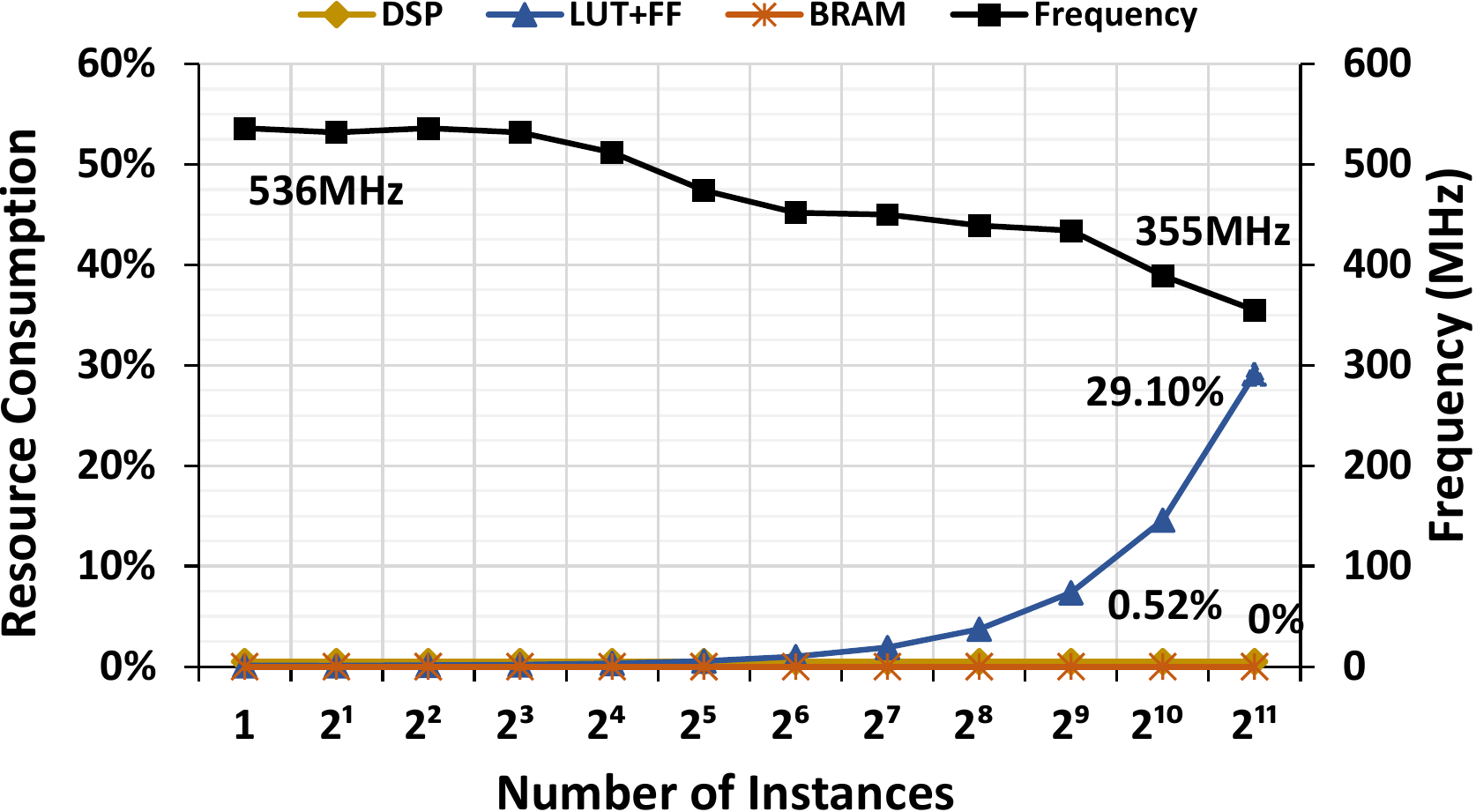}
     \caption{Resources consumption and clock frequency with varying number of SOU instances.}
     \label{fig:resource_by_instance}

\end{figure}

\begin{figure}[ht!]
     \centering
     \includegraphics[width=0.90\columnwidth]{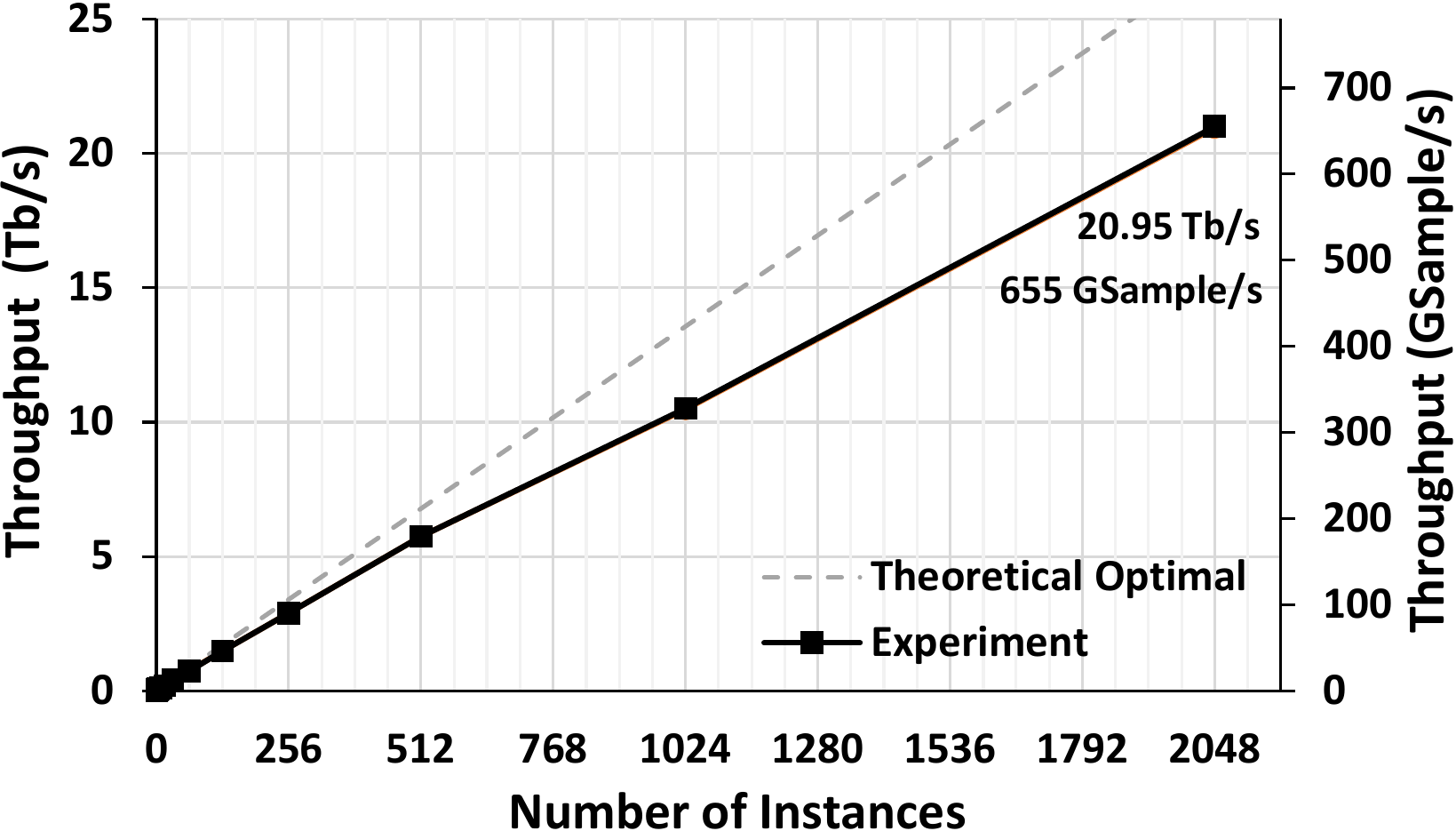}
     \caption{Throughput with varying number of SOU instances.}
     \label{fig:throughput_by_instance}

\end{figure}

\subsection{Throughput Evaluation on FPGA} \label{subsubsec:throughput_eva}

We evaluate the throughput and resource consumption of the proposed FPGA implementation of ThundeRiNG.
\Cref{fig:resource_by_instance} shows the resource consumption and the implementation frequency with the increasing number of PRNG instances.
The results show that DSP consumption is less than 1\% of the total capacity and, more importantly, it is oblivious to the number of instances.
This is because only the root state generation unit that requires the multiplication operation consumes the DSP resource, and ThundeRiNG only needs one root state generation unit for MISRN generation.
In addition, ThundeRiNG does not occupy any BRAM resource since the state is small enough to fit into the registers, and thus the BRAM utilization is 0\%.
The frequency is up to 500MHz and gradually drops as increasing number of instances due to more resource (FF + LUT) consumption.

\Cref{fig:throughput_by_instance} shows the overall throughput with increasing number of instances, where the solid black line is the measured throughput, and the dashed grey line is the optimal throughput under the frequency of 550MHz. 
The observed throughput is nearly proportional to the number of instances and can be up to 20.95 Tb/s with 2048 instances.
The gap between the optimal line and our results is because of the frequency drop (from 536MHz to 355MHz).

\subsubsection{Comparison with FPGA-based Works} \label{subsubsec:fpga_compare}
\Cref{tab:capacity} shows the performance comparison between ThundeRiNG and other state-of-the-art FPGA works as well as implementations with optimistic scaling. We estimate the throughput of the FPGA methods~\cite{li2011software, thomas2012lut} with optimistic scaling, where we assume the number of PRNG instances scales perfectly within the resource capacity, and the implementation frequency is fixed at 500MHz. In addition, we estimate the performance of porting high-quality CPU-based solutions~\cite{salmon2011parallel,blackman2018scrambled} to run on FPGAs. 
The estimated number of PRNG instances of CPU-based implementation on FPGAs is equal to the resource capacity of the FPGA platform divided by the resource consumption of one PRNG reported by the synthesis of the Vitis tool. We assume they have the 500MHz frequency on FPGA.


\begin{table}[t]

\caption{Throughput, quality test and resource utilization of the state-of-the-art FPGA-based works and porting CPU-based designs to FPGAs.}
\label{tab:capacity}
\resizebox{0.476\textwidth}{!}{%
\begin{tabularx}{0.67\textwidth}{llccccXc}
\toprule
\multicolumn{1}{l}{\multirow{2}{*}{\textbf{PRNGs}}}
& \multirow{2}{*}{\textbf{Quality}}
& \multirow{1}{*}{\textbf{Freq.}}
& \multirow{1}{*}{\textbf{Max}}
& \multirow{1}{*}{\textbf{BRAM}}
& \multirow{1}{*}{\textbf{DSP}}
& \multicolumn{1}{c}{\multirow{1}{*}{\textbf{Thr.}}}
& \multirow{2}{*}{\textbf{Sp.}}

\\
&
&\multirow{1}{*}{\textbf{(MHz)}}
&\multicolumn{1}{c}{\multirow{1}{*}{\textbf{\#ins.}}}
&\multirow{1}{*}{\textbf{(\%)}}
&\multirow{1}{*}{\textbf{(\%)}}
&\multirow{1}{*}{\textbf{(Tb/s)}}
&
\\ \midrule
\multicolumn{2}{l}{\textbf{\textit{Implementation Benchmarking:}}} \\
\textbf{ThundeRiNG}              & \textbf{Crush-resistant}
&  \textbf{355} &  \textbf{2048}  &   \textbf{0\%}  &   \textbf{0.5\%}   & \textbf{20.95}   & \textbf{$1\times$}  \\


Li et al.~\cite{li2011software}      & Crushable
& 475 & 16  & 1.6\% &  0\%   & 0.24  & $87.08\times$   \\

LUT-SR~\cite{thomas2012lut} & Crushable 
& 600 & 1   & 0\%   & 0\%    & 0.37  & $55.9\times$ \\

\midrule

\multicolumn{2}{l}{\textbf{\textit{Optimistic Scaling:}}} \\

Philox4\_32~\cite{salmon2011parallel} &  Crush-resistant
& 500 & 442  & 0\% &  100\%  & 2.83   & $7.39\times$    \\

Xoroshiro128**~\cite{blackman2018scrambled}  &  Crush-resistant
& 500 & 1150  &  0\%  & 100\% &  18.40 & $1.14\times$   \\

Li et al.~\cite{li2011software}     & Crushable
& 500 & 1000  & 100\% &  0\%  & 16.00  & $1.37\times$    \\

\bottomrule

\end{tabularx}%
}
\end{table}

ThundeRiNG outperforms all other designs significantly, delivering 87.08$\times$ and 55.9$\times$ speedup over the state-of-the-art FPGA-based solutions~\cite{li2011software, thomas2012lut} while guaranteeing a high quality of randomness.
{More importantly, while Li et al.~\cite{li2011software} achieve a throughput of 16Tb/s with optimistic scaling, ThundeRiNG still has 37\% higher throughput. It is also noteworthy that ThundeRiNG consumes no BRAMs while they use up all BRAMs.
}
Even assuming that Philox4\_32~\cite{salmon2011parallel} and xoroshiro128**~\cite{blackman2018scrambled} are ideally ported to the FPGA platform, ThundeRiNG still delivers 7.39$\times$ and 1.15$\times$ speedups over them, respectively, with much lower resource consumption.


\begin{table}[ht!]
\vspace{-2mm}
\caption{Throughput of various GPU PRNG schemes running on Nvidia Tesla P100 compared to ThundeRiNG's throughput.}
\label{tab:gpu_throughput}
\resizebox{0.476\textwidth}{!}{%
\begin{tabularx}{0.6\textwidth}{lYYYY}
\toprule
\multicolumn{1}{c}{\multirow{1}{*}{\textbf{Algorithms}}}
& \multirow{2}{*}{\textbf{BigCrush}}
&\multicolumn{2}{c}{\multirow{1}{*}{\textbf{Throughput}}}
&\multicolumn{1}{c}{\textbf{ThundeRiNG's }} \\

\multicolumn{1}{c}{\multirow{1}{*}{\textbf{(cuRAND)}}} &&
\multicolumn{1}{c}{\textbf{GSample/s}} &
\multicolumn{1}{c}{\textbf{Tb/s}}      & \multicolumn{1}{c}{\textbf{Speedup}}  \\


\midrule
Philox-4×32~\cite{salmon2011parallel}          & Pass 
&   \multicolumn{1}{c}{61.6234 }& \multicolumn{1}{c}{1.9719}
&   $\bm{10.62\times}$     \\
MT19937~\cite{matsumoto1998mersenne}           & Pass 
&   \multicolumn{1}{c}{51.7373 }& \multicolumn{1}{c}{1.6556}
&   $12.65\times$          \\
MRG32k3a~\cite{l1996combined}                  & 1 failure 
&   \multicolumn{1}{c}{26.2662 }& \multicolumn{1}{c}{0.8405}
&   $\bm{24.92\times}$     \\
xorwow~\cite{marsaglia2003xorshift}            & 1 failure 
&   \multicolumn{1}{c}{56.6053 }& \multicolumn{1}{c}{1.8114}
&   $11.56\times$          \\
MTGP32~\cite{DBLP:journals/corr/abs-1005-4973} & 1 failure 
&   \multicolumn{1}{c}{29.1273 }& \multicolumn{1}{c}{0.9321}
&   $22.47\times$          \\
\bottomrule

\end{tabularx}%
}
\end{table}

\subsection{Comparison with Existing Works on GPUs}
We perform throughput and quality comparison with GPU-based PRNGs in cuRAND~\cite{cuRANDguide}, which is the official library from Nvidia, as shown in ~\Cref{tab:gpu_throughput}. The statistical test results of cuRAND on the BigCrush battery are collected from the official document~\cite{cuRANDguide}.
The results show the ThundeRiNG outperforms GPU-based solutions from 10.6$\times$ to 24.92$\times$. 
On the other hand, three of the GPU-based PRNGs fail to pass the BigCrush test, while ThundeRiNG passes all tests. These experiments indicate that ThundeRiNG outperforms cuRAND in both throughput and quality.

\subsection{ThundeRiNG on CPU and GPU} \label{subsubsec:compatility}

{As a sanity check, we evaluate the design of ThundeRiNG on the CPU/GPU. 
\Cref{fig:thundering_on_cpu} compares the throughput of porting the design of ThundeRiNG to CPU/GPU with the state-of-the-art CPU/GPU-based PRNG implementations (Intel MKL and cuRAND).
ThundeRiNG did not perform well on the CPU when the number of instances is larger than $2^4$ because the overhead of CPU synchronization for state sharing rises dramatically. 
ThundeRiNG on GPU slightly outperforms cuRAND.  To be fair, more optimizations for these implementations are needed in the future. For example, the ThundeRiNG design on the CPU can be improved by utilizing SIMD, and ways can be found to reduce the synchronization overhead.
}

{
In summary, because of its fine-grained parallelism and synchronization, the state sharing and decorrelation proposed fits FPGAs better, allowing for the instantiation of many MISRN generators.
Consequently, the throughput of ThundeRiNG on FPGA scales linearly with the number of generators and outperforms other designs significantly, even though it runs at a much slower frequency.
}

\begin{figure}[t]
     \centering
     \includegraphics[width=0.92\columnwidth]{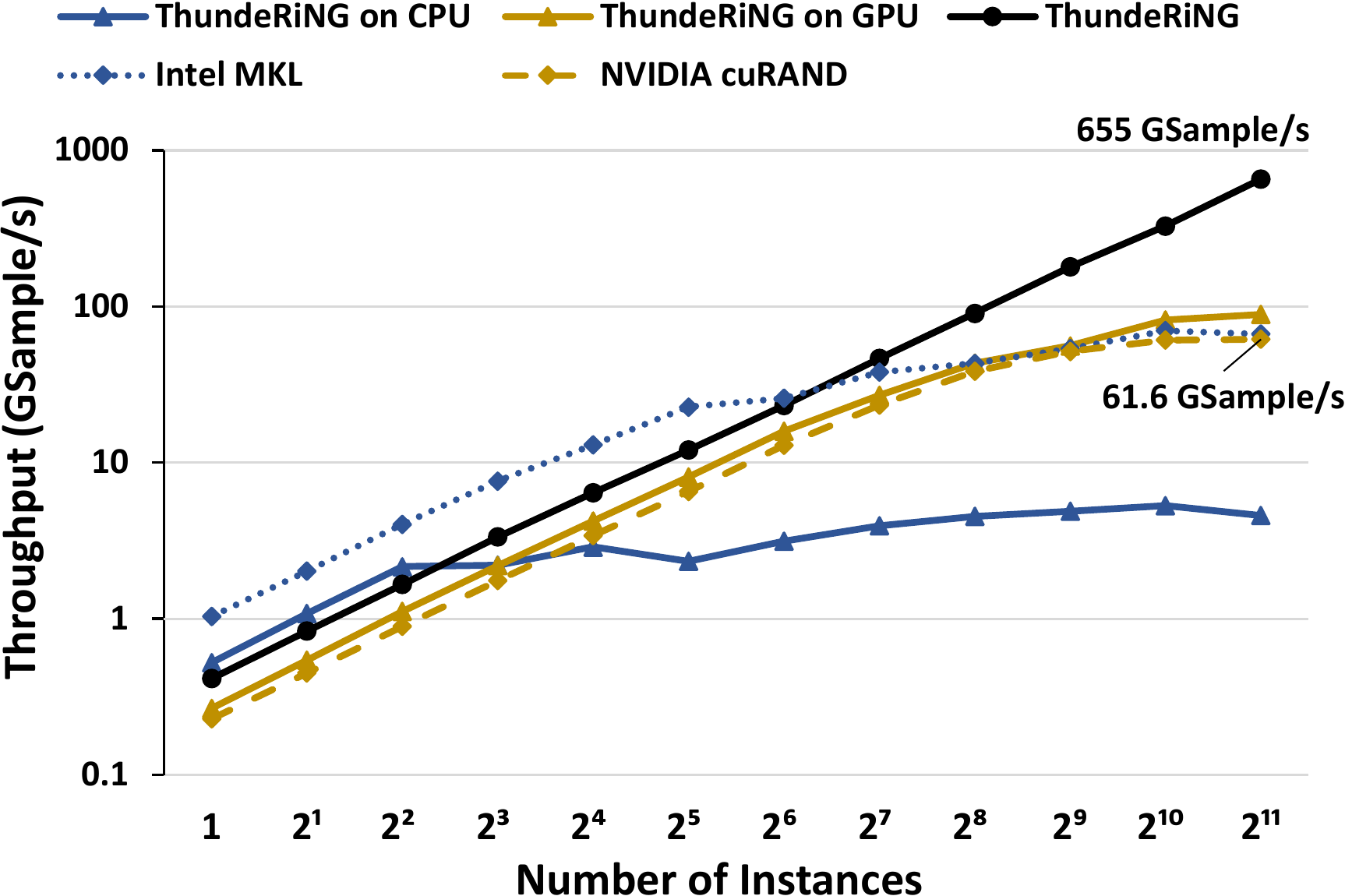}
     \caption{Performance comparison of the compatibility implementations of ThundeRiNG on CPU/GPU,  Intel MKL PRNG and NVIDIA cuRAND with varying number of instances.}
     \label{fig:thundering_on_cpu}
     \vspace{-3mm}
\end{figure}



\section{Case Studies}\label{sec:case}

To further demonstrate the advantages of ThundeRiNG, we apply it to two applications: the estimation of $\pi$ and Monte Carlo option pricing.
Furthermore, we compare the FPGA implementations with the GPU-based ones.

\subsection{Implementation}
Estimating the value of $\pi$ is widely used as an application to demonstrate the efficiency of the PRNG~\cite{li2011software,howes2007efficient}.
The basic idea is that assuming we have a circle and a square that encloses the circle, the value $\pi$ can be calculated by dividing the area of the circle by the area of the square.
In order to estimate the area of the circle and the square, we generate a large number of random points within the square and count how many of them falling in the enclosed circle.
Random number generation is the bottleneck of this application as it consumes 87\% of the total execution time of the GPU-based implementation according to our experiment.

Monte Carlo option pricing is commonly used in the mathematical finance domain. It calculates the values of options using multiple sources of random features, such as interest rates, stock prices, or exchange rates. It relies on PRNGs to generate a large number of possible but random price paths for the underlying of derivatives. The final decision is made by computing the associated exercise value of the option for each path.
We choose the Black-Scholes model as the target model for option pricing.
On the GPU-based implementation, random number generation accounts for 54\% of the total execution time, according to our experiment.

We implement two applications on both GPU and FPGA platforms for comparison.
For GPU-based implementations, we directly use the officially optimized designs of Nvidia using the cuRAND library~\cite{cuRANDguide}, targeting the Nvidia Tesla P100 GPU.
For our FPGA-based implementation, we use the ThundeRiNG for random number generation and design the rest of the logic in the application using HLS to achieve the same functionality as the GPU-based designs.
For all implementations, single-precision floating points were used as the data type.


\begin{figure}[t]
     \centering
     \includegraphics[width=0.92\columnwidth]{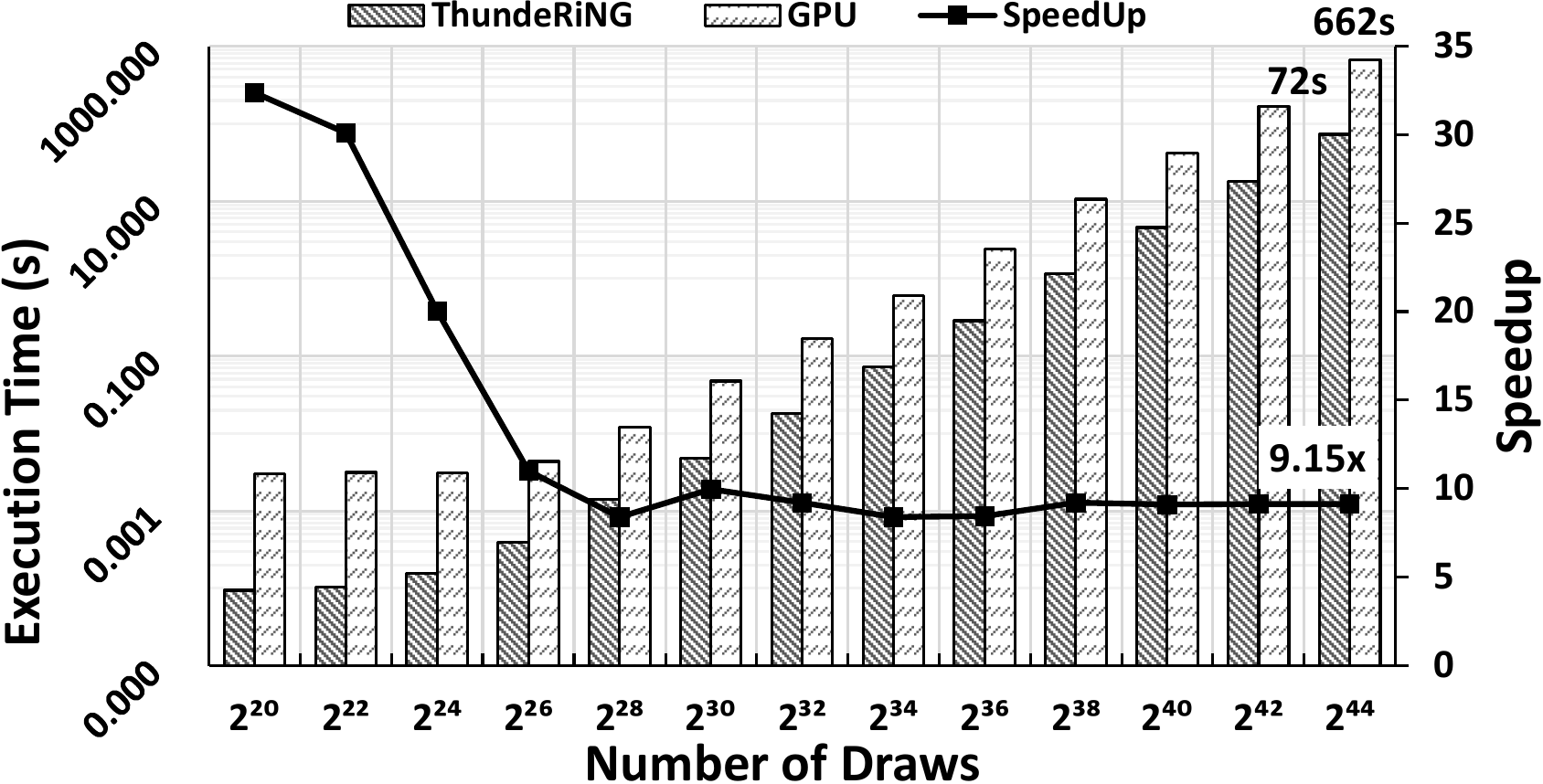}
     \caption{Execution time of estimation of $\bm{ \pi }$ of FPGA-based solution (ThundeRiNG) and GPU-based solution with varying number of draws.}
     \label{fig:case1}
\end{figure}

\begin{figure}[t]
     \centering
     \includegraphics[width=0.92\columnwidth]{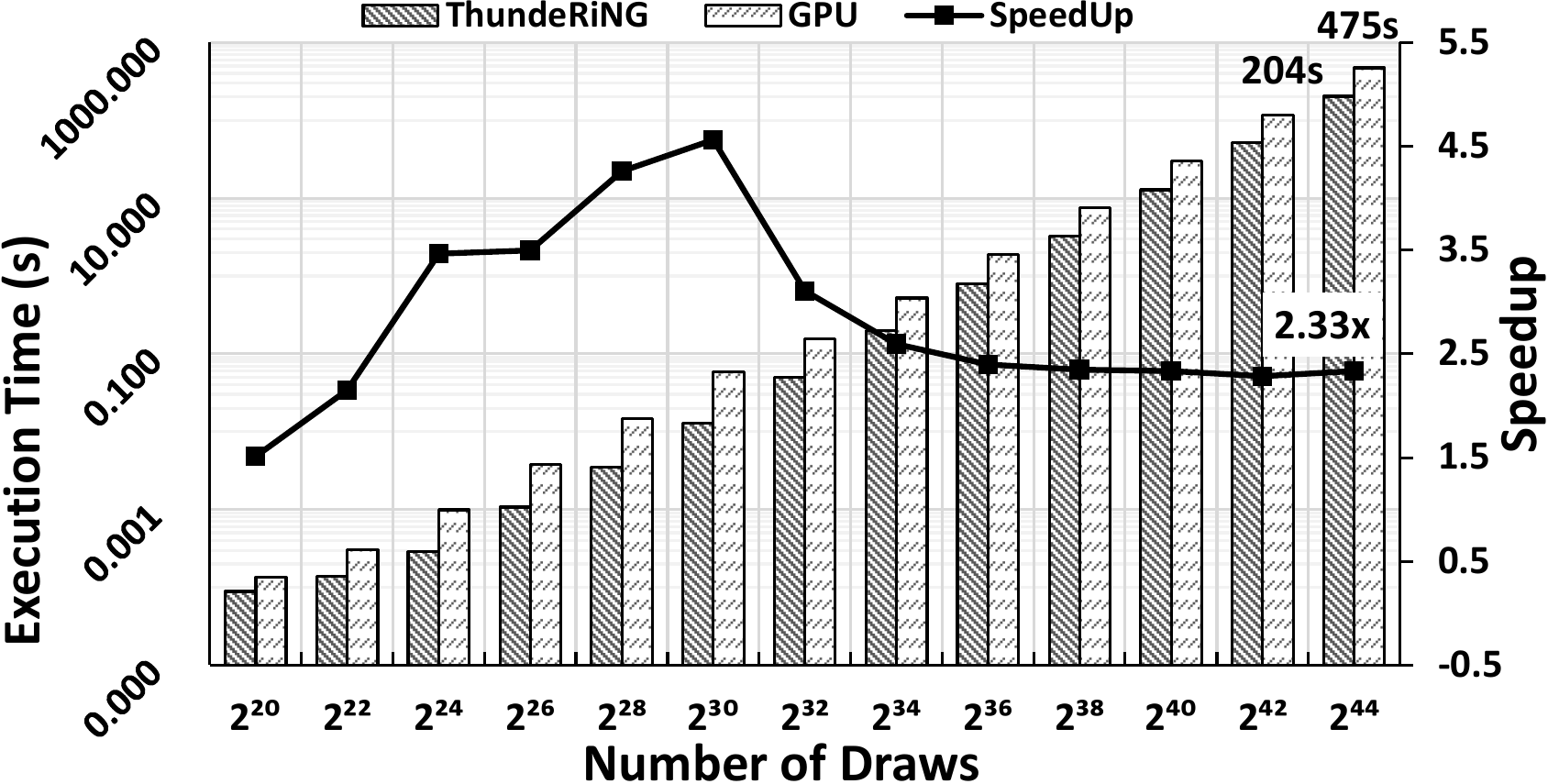}
     \caption{Execution time of Monte Carlo option pricing of FPGA-based solution (ThundeRiNG) and GPU-based solution with varying number of draws.}
     \label{fig:case2}
\end{figure}

\subsection{Results}

\Cref{fig:case1} shows the performance of FPGA-based solution (ThundeRiNG) and GPU-based solution on the estimation of $\bm{ \pi }$  with varying the number of draws, where each draw requires two random numbers. The results show FPGA-based solution significantly outperforms GPU-based solution for all number of draws, and the speedup is stable and up to 9.15$\times$ for the massive number of draws. The downgrade trend of speedup is because GPU-based implementation cannot fully utilize the hardware capacity when the number of draws is not large.

\Cref{fig:case2} shows the performance of FPGA-based solution (ThundeRiNG) and GPU-based solution on Monte Carlo option pricing with varying number of draws, each draw requiring a new random number. Our implementation with ThundeRiNG significantly outperforms the GPU-based solution for all number of draws. The speedup of the massive number of draws can be up to 2.33$\times$.

In addition to the comparison on throughput, we also show the resource utilization of the FPGA platform and the power efficiency comparison between GPU and FPGA, as shown in \Cref{tab:gpu_application}, where the power consumption is reported by respective official tools, namely {\tt nvidia-smi} for GPU and {\tt xbutil} for FPGA, and the power efficiency is calculated by dividing the throughput by the power consumption.
The results show the FPGA-based solutions outperform the GPU-based solutions by 6.83$\times$ and 26.63$\times$ for MC option pricing and the estimation of $\pi$, respectively. The end-to-end comparison of the two applications demonstrates that ThundeRiNG is able to generate massive independent random numbers with high throughput, and FPGA can be a promising platform for PRNG involved applications.


\begin{table}[t]
\caption{The comparison of throughput and power efficiency of two applications between FPGA and GPU.}
\label{tab:gpu_application}
\resizebox{0.476\textwidth}{!}{%

\begin{tabularx}{0.68\textwidth}{llcc}
\toprule

\multicolumn{2}{l}{\multirow{1}{*}{\textbf{Applications}}}
& \multirow{1}{*}{\textbf{Estimation of $\bm{ \pi }$}}
& \multirow{1}{*}{\textbf{MC option pricing}} \\\midrule


& Frequency (MHz) & 304      &  335 \\

& Number of instances  & 1600 &  256  \\

\multicolumn{1}{c}{\multirow{2}{*}{ FPGA: Alveo U250 }}
& LUTs    & 1048235(70\%)  &  735173(49\%)    \\

\multicolumn{1}{c}{\multirow{2}{*}{ (16nm FinFET) }}
& FFs     & 1171130(38\%)  &   751810(24\%)   \\
& DSPs    & 5512(45\%)     &   5984(49\%)      \\
&Throughput (GSample/s) & 480 & 86  \\
&Power consumption (W) & 45  & 43 \\ \midrule

\multicolumn{1}{c}{\multirow{2}{*}{GPU: Tesla P100 } }

& Frequency (MHz) & 1,190  &  1,190 \\

\multicolumn{1}{c}{\multirow{2}{*}{ (16nm FinFET) }}
&Throughput (GSample/s) & 53 & 33 \\


&Power consumption (W) & 131  & 126 \\  \midrule

\multicolumn{1}{c}{\multirow{1}{*}{ThundeRiNG's}}
& Throughput speedup& \textbf{9.15x}  &  \textbf{2.33x} \\

\multicolumn{1}{c}{\multirow{1}{*}{improvement}}
&Power efficiency &  \textbf{26.63x} & \textbf{6.83x} \\\bottomrule


%
%
%

\end{tabularx}%
}
 \vspace{-3mm}
\end{table}

\section{Conclusion}\label{sec:conclusion}
In this paper, we propose the first high-throughput FPGA-based crush-resistant PRNG called ThundeRiNG for generating multiple independent sequences of random numbers. Theoretical analysis shows that our decorrelation method can enable the concurrent generation of high-quality random numbers. By sharing the state, ThundeRiNG uses a constant number of multipliers and BRAM regardless of the number of sequences to be generated. Our results show that ThundeRiNG outperforms all current FPGA and GPU based pseudo-random number generators significantly in performance as well as quality of the output. Furthermore, ThundeRiNG is designed to be used as a `plug-and-play' IP block on FPGAs for developer convenience. We believe that our work contributes to making the FPGA a promising platform for high performance computing applications.

\begin{acks}
{
We thank the anonymous reviewers for their valuable feedback on this work.
We thank the Xilinx Adaptive Compute Cluster (XACC) Program~\cite{xacc} for the generous donation.
This work is supported by MoE AcRF Tier 1 grant (T1 251RES1824), Tier 2 grant (MOE2017-T2-1-122) in Singapore, and also partially supported by the National Research Foundation, Prime Minister's Office, Singapore under its Campus for Research Excellence and Technological Enterprise (CREATE) programme.
}
\end{acks}

\balance
\bibliographystyle{ACM-Reference-Format}
\bibliography{ref}

\end{document}